\def\draft{1}
\declaretheoremstyle[bodyfont=\it,qed=\qedsymbol]{noproofstyle}
\numberwithin{equation}{section}
\declaretheorem[name=Observation,numbered=no]{observation*}
\declaretheorem[name=Theorem,numberwithin=section]{theorem}
\declaretheorem[name=Theorem,numbered=no]{theorem*}
\declaretheorem[name=Lemma,numbered=no]{lemma*}
\declaretheorem[name=Corollary,numbered=no]{corollary*}
\declaretheorem[name=Proposition,numbered=no]{proposition*}
\declaretheorem[name=Claim,numbered=no]{claim*}
\declaretheorem[name=Conjecture,numbered=no]{conjecture*}
\declaretheorem[name=Question,numbered=no]{question*}
\declaretheoremstyle[bodyfont=\it]{defstyle} 
\declaretheorem[numberlike=equation,style=defstyle]{definition}
\declaretheorem[unnumbered,name=Definition,style=defstyle]{definition*}
\declaretheorem[unnumbered,name=Example,style=defstyle]{example*}
\declaretheorem[unnumbered,name=Notation=defstyle]{notation*}
\declaretheorem[unnumbered,name=Construction,style=defstyle]{construction*}
\declaretheoremstyle[]{rmkstyle} 
\newtheorem*{remark}{Remark}
\newcommand{\RoundPLSig}{\mathsf{PLSigmoid}}
\newcommand{\RoundS}{\mathsf{S}}
\newcommand{\mcut}{\textsc{Max-Cut}}
\newcommand{\mdcut}{\textsc{Max-DiCut}}
\newcommand{\mkand}{\textsc{Max-}k\textsc{And}}
\newcommand{\Obl}[1]{\mathcal{O}_{#1}}
\renewcommand{\deg}{\mathsf{deg}}
\newcommand{\dout}{\mathsf{outdeg}}
\newcommand{\din}{\mathsf{indeg}}
\newcommand{\val}{\mathsf{val}}
\newcommand{\bias}{\mathsf{bias}}
\definecolor{figblue}{HTML}{ABDEE6}
\definecolor{figred}{HTML}{FEE1E8}
\definecolor{figpurp}{HTML}{CBAACB}
\definecolor{figdarkblue}{HTML}{8FAAE3}
\def\FIGcolpos{figblue}
\def\FIGcolposdesc{~\colorbox{\FIGcolpos}{LIGHT BLUE}~}
\def\FIGcolneg{figred}
\def\FIGcolnegdesc{~\colorbox{\FIGcolneg}{PINK}~}
\def\FIGcolzero{figpurp}
\def\FIGcolzerodesc{~\colorbox{\FIGcolzero}{PURPLE}~}
\def\FIGcolsuperpos{figdarkblue}
\def\FIGcolsuperposdesc{~\colorbox{\FIGcolsuperpos}{DARK BLUE}~}
\def\FIGcollab{gray!20!white}
\def\FIGpackx{3}
\def\FIGpacky{3}
\def\FIGbend{25}
\def\FIGsubfigwidth{0.45\textwidth}
\tikzset{weight/.style = {font=\small, fill=\FIGcollab, draw=black}}
\tikzset{
    diredge/.style = {line width=0.3mm, arrows={-Latex[angle=60:2.5mm]}}
}
\title{Oblivious Algorithms for Maximum Directed Cut: New Upper and Lower Bounds}
\author{\ifnum\draft=1
Samuel Hwang\thanks{Harvard College, Harvard University, Cambridge, 
MA, USA. Email: \texttt{samuelhwang@college.harvard.edu}.}
\and Noah G. Singer\thanks{Department of Computer Science, Carnegie Mellon University, Pittsburgh, PA, USA. Supported by an NSF Graduate Research Fellowship (Award DGE2140739). Email: \texttt{ngsinger@cs.cmu.edu}.}
\and Santhoshini Velusamy\thanks{Toyota Technological Institute at Chicago, Chicago, IL, USA. Supported by an NSF CRII award CCF 2348475. Email: \texttt{santhoshini@ttic.edu}.}
\fi}
\begin{document}
\date{}
\sloppy
\maketitle


\begin{abstract}
In the \emph{maximum directed cut} ($\mdcut$) problem, the input is a directed graph $G=(V,E)$, and the goal is to pick a partition $V = S \cup (V \setminus S)$ of the vertices such that as many edges as possible go \emph{from} $S$ \emph{to} $V\setminus S$. \emph{Oblivious algorithms}, introduced by \textcite{FJ15}, are a simple class of algorithms for this problem. These algorithms independently and randomly assign each vertex $v$ to either $S$ or $V \setminus S$, and the distribution of $v$'s assignment is determined using only extremely local information about $v$: its \emph{bias}, i.e., the relative difference between its out- and in-degrees. These algorithms have natural implementations in certain graph streaming models, where they have important implications \cite{SSSV23-dicut,SSSV23-random-ordering,kallaugher2023exponential}.

In this work, we narrow the gap between upper and lower bounds on the best approximation ratio achievable by oblivious algorithms for $\mdcut$. We show that there exists an oblivious algorithm achieving an approximation ratio of at least $0.4853$, while every oblivious algorithm obeying a natural symmetry property achieves an approximation ratio of at most $0.4889$. The previous known bounds were $0.4844$ and $0.4899$, due to \textcite{Sin23-kand,FJ15}, respectively. Our techniques involve designing principled parameterizations of the spaces of algorithms and lower bounds and then executing computer searches through these spaces.
\end{abstract}

\section{Introduction}

In this work, we study a special class of algorithms, called \emph{oblivious algorithms}, for a specific constraint satisfaction problem, called \emph{maximum directed cut} ($\mdcut$). We first informally describe these two notions; see \cref{sec:prelim:mdcut} below for formal definitions.

\subsection{Background: Directed cuts, bias, and oblivious algorithms}

An input instance to $\mdcut$ is a \emph{directed graph} $G = ([n],E)$ (possibly with edge weights) on vertex set $[n] = \{1,\ldots,n\}$. A \emph{cut} is a vector $\vecx = (x_1,\ldots,x_n) \in \{\pm 1\}^n$, and can be viewed as an assignment of a bit to each vertex in the graph. An edge $(v_1,v_2)$ is \emph{satisfied} by the cut $\vecx$ if $x_{v_1} = 1$ and $x_{v_2} = 0$. The \emph{value} of a cut is the fraction of edges it satisfies, and the \emph{value} of a graph, denoted $\val_G$, is the maximum value over all cuts.

To define oblivious algorithms, we first define a scalar quantity associated to each vertex in a directed graph called \emph{bias}. The bias of a vertex is simply \[ \bias_G(v) \eqdef \frac{\dout_G(v)-\din_G(v)}{\dout_G(v)+\din_G(v)}, \] where $\dout_G(v)$ and $\din_G(v)$ are respectively the out- and in-degrees of $v$ in $G$. Note that $\bias_G(v)$ ranges from $+1$ ($v$ has only out-edges) to $-1$ ($v$ has only in-edges).

Oblivious algorithms are a class of randomized algorithms for $\mdcut$ which ``only know'' about the bias of each vertex. More formally, an oblivious algorithm is defined by a so-called \emph{selection function} $\RoundS : [-1,+1] \to [0,1]$. The corresponding algorithm $\Obl{\RoundS}$, given a graph $G$, produces an assignment by independently setting each vertex $v$ to equal $1$ w.p. $\RoundS(\bias_G(v))$ and $0$ otherwise. The \emph{approximation ratio} of $\Obl{\RoundS}$, denoted $\alpha(\Obl{\RoundS})$, is the ratio of the expected value of this ``oblivious assignment'' to the value of the best assignment, minimized over all graphs $G$.

Oblivious algorithms for $\mdcut$ were introduced by \textcite{FJ15}, who proved both upper and lower bounds\footnote{Note on language: In this paper, we use the usual convention that upper bounds are algorithms and lower bounds are hardness results. Confusingly, an ``upper bound'' actually lower-bounds the maximum ratio achievable by any oblivious algorithm, while a ``lower bound'' upper-bounds this ratio.} on their capacity to approximate $\mdcut$:

\begin{theorem}[Prior upper bound, {\cite[Thm. 1.3]{FJ15}}]\label{thm:fj-ub}
    There exists an oblivious algorithm $\Obl{\RoundS}$ achieving an approximation ratio $\alpha(\Obl{\RoundS}) \geq 0.483$.
\end{theorem}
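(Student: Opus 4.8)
The plan is to exhibit an explicit selection function $\RoundS$ and then argue that its approximation ratio exceeds $0.483$ by a direct analysis of the worst-case instance. Following \textcite{FJ15}, I would take $\RoundS$ to be a simple piecewise-linear ``ramp'' function: $\RoundS(b) = 0$ for $b \le -t$, $\RoundS(b) = 1$ for $b \ge t$, and $\RoundS(b) = \frac{1}{2}(1 + b/t)$ in between, for a parameter $t \in (0,1]$ to be optimized. (One could also use the pure sigmoid-like $\RoundS(b) = \frac{1}{2}(1+b)$, but the clamped ramp typically does slightly better.) The reason such functions are natural: a vertex with large positive bias is ``mostly a source,'' so it wants to be on the $S$ side ($x_v = 1$), and vice versa; the function should be monotone and centrally symmetric, i.e.\ $\RoundS(-b) = 1 - \RoundS(b)$.

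The first real step is to compute the expected value of the oblivious assignment on an arbitrary instance $G$. By linearity of expectation, $\mathbb{E}[\val_G(\vecx)] = \frac{1}{|E|}\sum_{(u,v) \in E} \RoundS(\bias_G(u))\,(1 - \RoundS(\bias_G(v)))$. The key structural point is that this quantity depends on $G$ only through the distribution of bias pairs along edges; and because $\RoundS$ is built from linear pieces, one can reorganize the sum into a small number of aggregate statistics of the bias sequence. Concretely, each vertex $v$ contributes its out-edges weighted by $\RoundS(\bias(v))$ and its in-edges weighted by $1 - \RoundS(\bias(v))$, so the expected number of satisfied edges can be lower-bounded vertex-by-vertex in terms of $\dout(v)$, $\din(v)$ and $\bias(v)$ alone. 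This reduces the estimate of $\mathbb{E}[\val]$ to a ``local'' bound, and then the global estimate follows by summing.

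The second step is to upper-bound $\val_G$ — or rather, to compare $\mathbb{E}[\val_G(\vecx)]$ to $\val_G$ instance by instance. The standard move is to fix the optimal cut $\vecx^\ast$ and partition the edges into the four classes according to which side of $\vecx^\ast$ their endpoints lie on; only the ``cut in the right direction'' class contributes to $\val_G$. One then shows that for every edge, $\RoundS(\bias(u))(1 - \RoundS(\bias(v)))$ is at least some explicit constant times the indicator that the edge is satisfied by $\vecx^\ast$, after accounting for how bias constrains the possible degree splits. This yields $\alpha(\Obl{\RoundS}) \ge$ the minimum, over the relevant one- or two-parameter family of ``extremal local configurations,'' of an explicit elementary function of $t$ and those parameters. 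Optimizing over $t$ then gives a numerical bound; one checks that the optimum is $\ge 0.483$.

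The main obstacle is the reduction to a finite-dimensional optimization: a priori the worst case ranges over all directed graphs, and one must argue that it suffices to consider a bounded-complexity family of instances (e.g.\ instances supported on a constant number of bias values, or a convexity/LP-duality argument showing the worst case is attained at a vertex of some polytope). This is where \textcite{FJ15} do the real work, and I would expect to either invoke a factor-revealing LP / configuration-LP formulation, or to prove by an exchange argument that perturbing $G$ toward a ``canonical'' form can only decrease the ratio. Once the search space is pinned down to low dimension, the remaining inequality is a routine (if tedious) calculus exercise, and the final bound $\alpha(\Obl{\RoundS}) \ge 0.483$ drops out by plugging in the optimized threshold $t$.
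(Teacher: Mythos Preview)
The paper does not actually prove this statement --- it is quoted as a prior result of \textcite{FJ15} --- so there is no ``paper's own proof'' to compare against directly. What the paper does contain is the LP of \cref{thm:antisymmetric-lp} (originating in \cite{FJ15,Sin23-kand}) and the proof of the stronger \cref{thm:pl-ub}, and both make clear how the $0.483$ bound is obtained in \cite{FJ15}: one takes a \emph{piecewise constant} (discretized) selection function, observes that the approximation ratio of such a function is \emph{exactly} the optimal value of an explicit finite LP whose variables encode the worst-case graph, and then simply solves that LP numerically. Your high-level choice of $\RoundS$ (the clamped ramp, i.e., a PL sigmoid) is right, and your final paragraph correctly names the factor-revealing LP as the real engine.

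The gap is in your ``second step.'' The per-edge inequality you propose --- that $\RoundS(\bias(u))\,(1-\RoundS(\bias(v)))$ is at least some fixed constant times $\1[(u,v)\text{ is in the optimal cut}]$ --- is false and cannot be rescued by ``accounting for how bias constrains degree splits'' locally. Take an edge $(u,v)$ in the optimal cut with $\bias(u)$ close to $-1$ and $\bias(v)$ close to $+1$; then $\RoundS(\bias(u))\approx 0$ and $1-\RoundS(\bias(v))\approx 0$, so the product is $\approx 0$ while the indicator is $1$. Nothing about $u$'s or $v$'s own degree split rules this out. The point of the LP is precisely that no edge-by-edge (or vertex-by-vertex) comparison works: the bias constraints are \emph{global} coupling constraints on the edge-weight distribution, and the ratio is only bounded once you minimize over all feasible distributions simultaneously. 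So the LP is not a finishing technicality that ``pins down the search space'' after a local argument; it \emph{is} the argument, and the $0.483$ is read off from its optimal value.
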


A selection function $\RoundS : [-1,+1] \to [0,1]$ is \emph{antisymmetric} if for all $b \in [-1,+1]$, $\RoundS(-b) = 1-\RoundS(b)$ (see also \cref{def:antisym}). Most of the selection functions which have been studied have this property.

\begin{restatable}[Prior lower bound for antisymmetric selection, {\cite[Thm. 1.4]{FJ15}}]{theorem}{thmfjantisym}\label{thm:fj-lb-antisym}
    If $\RoundS : [-1,+1] \to [0,1]$ is an antisymmetric selection function, then the oblivious algorithm $\Obl{\RoundS}$ achieves an approximation ratio $\alpha(\Obl{\RoundS}) \leq 0.4899$. 
\end{restatable}

We include a brief proof of \cref{thm:fj-lb-antisym} in \cref{sec:fj-lb} to facilitate comparison with our new lower bounds.

\begin{theorem}[Prior lower bound for general selection, {\cite[Thm. 1.5]{FJ15}}]\label{thm:fj-lb-general}
    If $\RoundS : [-1,+1] \to [0,1]$ is any selection function, then the oblivious algorithm $\Obl{\RoundS}$ achieves an approximation ratio $\alpha(\Obl{\RoundS}) \leq 0.4998$. 
\end{theorem}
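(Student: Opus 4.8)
The plan is to prove this hardness statement by exhibiting a small, explicit family of directed graphs $\mathcal G = \{G_1,\dots,G_k\}$ such that \emph{every} selection function $\RoundS$ is defeated on at least one member: writing $\val_\RoundS(G)$ for the expected value of $\Obl{\RoundS}$ on $G$, we want $\val_\RoundS(G_i) \le 0.4998\cdot\val_{G_i}$ for some $i$. The workhorse is the elementary identity
\[
\val_\RoundS(G) \;=\; \frac{1}{|E|}\sum_{(u,v)\in E}\RoundS\bigl(\bias_G(u)\bigr)\bigl(1-\RoundS\bigl(\bias_G(v)\bigr)\bigr),
\]
valid because $\Obl{\RoundS}$ sets endpoints independently and the edge $(u,v)$ is satisfied exactly when $u$ gets $1$ and $v$ gets $0$. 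If $\mathcal G$ is designed so that only finitely many bias values $b_1,\dots,b_m$ occur across all of its graphs, then ``$\RoundS$ beats $0.4998$ on every $G_i$'' becomes a finite system of polynomial inequalities in the unknowns $x_j = \RoundS(b_j)\in[0,1]$, and it suffices to show this system has no solution.

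Two observations guide the choice of $\mathcal G$. First, the complete bipartite graph directed from a set of pure sources (each of bias $+1$) to a set of pure sinks (each of bias $-1$) has value $1$, so on it $\val_\RoundS = \RoundS(+1)\bigl(1-\RoundS(-1)\bigr)$, forcing any $\RoundS$ with ratio above $0.4998$ to have $\RoundS(+1)$ near $1$ and $\RoundS(-1)$ near $0$. Moreover, every optimal cut may be taken to put all pure sources inside $S$ and all pure sinks outside it (such a move never destroys a satisfied edge), so \emph{no} gadget can push $\RoundS(\pm1)$ the other way; the hardness must come entirely from the behaviour of $\RoundS$ at intermediate biases. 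Second, if every non-isolated vertex of a graph has bias $0$ (i.e.\ $\dout=\din$ there), then for every cut $S$ we have $e(S,\bar S)-e(\bar S,S)=\sum_{v\in S}\bigl(\dout(v)-\din(v)\bigr)=0$, whence $\val_G\le\tfrac12$; but on such a graph $\Obl{\RoundS}$ with $\RoundS(0)=\tfrac12$ already achieves ratio $\ge\tfrac12$. So a family that breaks $\tfrac12$ must involve unbalanced vertices essentially, and the real task is to force $\RoundS$ away from its ``free'' value at some intermediate bias.

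Accordingly, I would look for a gadget containing two groups of vertices that share a common bias $b$ but that an optimal cut wants to assign \emph{oppositely} --- so the single value $\RoundS(b)$ cannot serve both --- while keeping the gadget's value as high as possible. The tension is that realizing a vertex of bias $b\neq0$ which an optimal cut places ``against the direction of its out-edges'' forces that vertex to carry many edges that are unsatisfied in the optimum, diluting $\val_G$; the optimal trade-off between this dilution and the strength of the forcing is what should work out to $0.4998$. I would then take a weighted disjoint union of this gadget with the bipartite gadget (realizing the weights by blowing the two up to appropriate sizes), optimize over the mixing weight, and finish the infeasibility check: substitute $\RoundS(+1)\approx1$ and $\RoundS(-1)\approx0$ into the remaining constraints and verify, by a short case analysis or a small linear/convex program, that the surviving polynomial system in the few variables $\RoundS(b_j)$ is unsatisfiable; this should also identify $0.4998$ as the exact threshold of the family.

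The step I expect to be the obstacle is pinning down this family. Because $0.4998$ sits just below the ``for free'' bound $\tfrac12$, the construction must be tuned precisely: a naive collision gadget, as a quick computation shows, still admits a selection function achieving ratio above $\tfrac12$ on it. One has to find a family --- together with a complete description of its optimal cuts --- in which \emph{every} selection function simultaneously mishandles a constant fraction of the edge mass, and then carry the exact optimization through.
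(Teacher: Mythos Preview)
Your proposal is a reasonable outline of the general strategy --- exhibit a finite family of graphs, reduce to infeasibility of a polynomial system in the finitely many values $\RoundS(b_j)$ --- but, as you yourself concede in the final paragraph, it stops short of a proof: you have not produced the family, and you flag exactly that as the obstacle.

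The paper does not itself prove this theorem; it is cited from \textcite{FJ15}. The Remark immediately following the statement, however, describes the argument, and it is both simpler and structurally different from what you are reaching for. Only two graphs $G$ and $L$ are used, and the analysis is governed by a single parameter $\delta$, namely the deviation of $\RoundS$ from $\tfrac12$ at one fixed bias. On $G$ (closely tied to the hard instance for the antisymmetric bound, \cref{thm:fj-lb-antisym}) the ratio is at most $0.4899 + \delta$; on $L$ the ratio is at most $\tfrac12 - 2\delta^2$. Equating these gives $\delta \approx 0.0099$ and both quantities $\approx 0.4998$. The idea your plan is missing is thus to \emph{bootstrap from the antisymmetric lower bound}: the antisymmetric instance already pins one value of $\RoundS$ to near $\tfrac12$, and a single extra gadget penalizes any deviation from that value quadratically. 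You are instead trying to design a collision gadget and optimize a fresh trade-off from scratch, without leveraging the $0.4899$ already in hand.

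That said, your ``two groups sharing a common bias but wanting opposite assignments'' instinct is not misguided --- it is exactly the mechanism behind the paper's own improvement, \cref{thm:gen-lb}, which uses the graphs of \cref{fig:twovertex,fig:fourvertex} to obtain $0.4955 < 0.4998$. So your direction, once carried out, would actually yield a stronger bound than the one stated here; what is missing is the concrete construction and the optimization.
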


\begin{remark}
    The constant in \cref{thm:fj-lb-general} is not optimized in the work of \textcite{FJ15}. However, it is straightforward to compute the best possible constant achievable with their proof technique. Indeed, the proof of \cite[Thm. 1.5]{FJ15} considers two graphs $G$ and $L$, and shows that if $\RoundS$ is any selection function with $\RoundS(\frac12) = \frac12+\delta$, then $\RoundS$ achieves ratio at most $0.4899 + \delta$ on $G$ and $\frac12-2\delta^2$ on $L$; hence, $\RoundS$ achieves ratio strictly below $\frac12$ on either $G$ or $L$. But equating these two quantities and solving for $\delta$ yields $\delta = 0.0099$; at this point, both quantities are roughly $0.4998$. At any other value of $\delta$, at least one of the quantities will be larger, and therefore, $0.4998$ is the optimal bound.
\end{remark}

One particularly nice feature of oblivious algorithms described by \cite{FJ15} is that (if the selection function $\RoundS$ is piecewise constant) the approximation ratio of $\Obl{\RoundS}$ can be calculated by a simple \emph{linear program} (LP); indeed, this LP essentially ``encodes'' the process of minimizing the approximation ratio over all possible graphs to find the worst-case input (see \cref{thm:antisymmetric-lp} below). A more recent work of \textcite{Sin23-kand} provided an open-source \texttt{python} implementation of the ratio-calculating LP, and used it to improve \cref{thm:fj-ub} via a more refined analysis of a similar oblivious algorithm:

\begin{theorem}[Improved prior upper bound, {\cite{Sin23-kand}}]\label{thm:sin-ub}
    There exists an oblivious algorithm $\Obl{\RoundS}$ achieving an approximation ratio $\alpha(\Obl{\RoundS}) \geq 0.484$.
\end{theorem}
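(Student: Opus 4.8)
The plan is to exhibit an explicit antisymmetric, piecewise-constant selection function $\RoundS$ and to certify $\alpha(\Obl{\RoundS}) \geq 0.484$ using the linear-programming characterization stated in \cref{thm:antisymmetric-lp}. The reason to insist that $\RoundS$ be piecewise constant is that the minimization over all graphs $G$ defining $\alpha(\Obl{\RoundS})$ then collapses to a finite-dimensional LP: following \cite{FJ15}, a worst-case instance together with an optimal assignment may be taken to have every vertex lying in one of finitely many ``bias classes'' — the breakpoints of $\RoundS$, each split by assigned bit — so that the instance is described by a table of nonnegative \emph{edge-type weights} $w_{ij}$ recording the (fractional) mass of edges from class $i$ to class $j$. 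In these variables the expected value of the oblivious cut is the linear form $\sum_{i,j} w_{ij}\,\RoundS(b_i)\,(1-\RoundS(b_j))$, the value of the recorded assignment is linear (the total mass of edges from a $1$-class to a $0$-class), and the requirement that a class $i$ genuinely have bias $b_i$ is a linear \emph{bias-consistency} constraint relating the out- and in-mass incident to that class. Normalizing the assignment's value to $1$ and minimizing the oblivious value over this polytope gives an LP whose optimum equals $\alpha(\Obl{\RoundS})$ by \cref{thm:antisymmetric-lp} (for the present bound, only the inequality ``LP optimum $\leq \alpha(\Obl{\RoundS})$'', i.e.\ that the polytope relaxes the set of genuine instances, is actually needed).

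Given this reduction the proof has two computational ingredients. First, one needs a suitable selection function: since \cref{thm:antisymmetric-lp} computes the true worst-case ratio exactly, it is natural to start from a piecewise-constant function resembling the one underlying \cref{thm:fj-ub} — whose hand-analyzed ratio of $0.483$ is presumably not tight — evaluate its actual ratio with the LP, and then perform a modest search over nearby functions, adjusting breakpoint locations and values and re-solving the LP each time using the open-source implementation of \cite{Sin23-kand}, until the LP value reaches $0.484$. Second, for the resulting $\RoundS$ one makes the bound rigorous: because the LP above is a \emph{minimization}, every feasible point of its dual is a valid lower bound on its optimum and hence on $\alpha(\Obl{\RoundS})$, so it suffices to take the optimal dual solution returned by the solver, round its entries to exact rationals, and check in exact arithmetic both that the rounded point remains dual-feasible and that its objective still exceeds $0.484$. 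Together with the reduction above this yields $\alpha(\Obl{\RoundS}) \geq 0.484$ unconditionally.

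I expect the search, rather than the verification, to be the main obstacle. The LP optimum is a continuous but non-smooth function of the breakpoint data: its combinatorial type — which dual constraints are tight, equivalently which edge types the worst-case instance places weight on — changes as $\RoundS$ is perturbed, so gradient methods are not directly applicable and one must fall back on coordinate-wise search or on hand-tuning guided by the structure of the tight instances. Moreover the LP has roughly quadratically many variables in the number of breakpoints, so there is a genuine trade-off between the expressiveness of $\RoundS$ and the cost of repeatedly solving the LP during the search. A secondary but nontrivial point is the rational rounding in the verification step: the dual point must be perturbed so as to preserve every inequality simultaneously while keeping the objective above the threshold, which in practice means rounding with a small uniform slack and confirming that no constraint consumes it.
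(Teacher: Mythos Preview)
Your proposal is correct and matches the approach the paper attributes to \cite{Sin23-kand} and uses itself for the stronger \cref{thm:pl-ub}: pick an antisymmetric piecewise-constant selection function (in \cite{Sin23-kand}, a discretization of $\RoundPLSig_{1/2}$ with $\ell\approx 200$ classes) and evaluate the LP of \cref{thm:antisymmetric-lp}. Note that the paper does not itself prove \cref{thm:sin-ub}; it is quoted as a prior result, and the paper's own proof of the stronger bound simply reports the LP value without the explicit dual-rounding verification step you describe.
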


However, both the works \cite{FJ15,Sin23-kand} left a significant open question:

\begin{question*}
    What is the best possible approximation ratio $\alpha(\Obl{\RoundS})$ which can be achieved by any oblivious algorithm $\Obl{\RoundS}$ for $\mdcut$?
\end{question*}

As we describe in \cref{sec:motivations} below, oblivious algorithms for $\mdcut$ are known to imply algorithms achieving (arbitrarily close to) the same ratio in several different streaming models \cite{SSSV23-dicut,SSSV23-random-ordering,kallaugher2023exponential}. Resolving this question would therefore characterize the best approximation ratios achievable via these streaming techniques. In this work, we make progress on the question and tighter upper and lower bounds on oblivious algorithms via intricate computer searches.

\subsection{Results}

We define a class of selection functions, which we call \emph{piecewise linear \emph{(PL)} sigmoid} functions (see \cref{def:pl-sig} below). These functions are denoted $\RoundPLSig_b$, where $b \in [0,1]$ is an ``intercept'' parameter. \cref{thm:fj-ub,thm:sin-ub} were proven by analyzing discretizations (i.e., piecewise-constant versions) of $\RoundPLSig_{1/2}$. We demonstrate a strictly better oblivious algorithm, using a discretization of a PL sigmoid function with a different intercept, namely, $\RoundPLSig_{149/309}$:

\begin{restatable}[New upper bound]{theorem}{thmplsigub}\label{thm:pl-ub}
    There exists an oblivious algorithm $\Obl{\RoundS}$ achieving an approximation ratio $\alpha(\Obl{\RoundS}) \geq 0.485359$.
\end{restatable}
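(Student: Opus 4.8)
The plan is to exhibit a concrete piecewise-constant selection function $\RoundS$ — obtained by discretizing the PL sigmoid $\RoundPLSig_{149/309}$ — and to certify that $\alpha(\Obl{\RoundS}) \geq 0.485359$ by producing a feasible dual solution to the ratio-calculating linear program of \cite{FJ15,Sin23-kand}. Concretely, I would first recall (from \cref{thm:antisymmetric-lp}, cited in the excerpt) that for a piecewise-constant $\RoundS$ the quantity $\alpha(\Obl{\RoundS})$ is the optimum of an LP whose variables encode a ``worst-case'' distribution over local edge configurations (parameterized by the bias classes of the two endpoints and whether the edge is cut in the optimal assignment), and whose objective compares the expected oblivious value to the optimal value. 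The key point is that \emph{any} feasible dual solution to this LP lower-bounds $\alpha(\Obl{\RoundS})$, so the proof reduces to displaying one dual vector and checking finitely many linear inequalities.

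The steps, in order: (1) Fix the discretization — choose the breakpoints of the bias interval $[-1,1]$ into finitely many subintervals and define $\RoundS$ to take the value $\RoundPLSig_{149/309}$ at a representative point (or the appropriate averaged value) on each subinterval; record the resulting finite list of rounding probabilities $p_1,\dots,p_m$. The choice of intercept $149/309$ and of the breakpoints is exactly what one optimizes over in the computer search, and the rational form is chosen so the LP has rational data and the verification is exact. (2) Write down the primal LP $\mathrm{OPT} = \min_{\text{configs}} (\text{oblivious value})$ subject to (optimal value $\geq$ normalization) and nonnegativity, following \cite{FJ15}; here I would reuse the formulation already set up in the paper. (3) Solve this LP numerically (via the open-source implementation of \cite{Sin23-kand}), extract an optimal dual solution, rationalize it, and then \emph{verify by hand/exact arithmetic} that the rationalized dual is feasible and has objective value $\geq 0.485359$. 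Since weak LP duality gives $\alpha(\Obl{\RoundS}) = \mathrm{OPT} \geq (\text{dual objective})$, this completes the proof.

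The main obstacle is step (3): the LP has many variables and constraints (one per pair of bias classes times two for the cut/uncut status, plus boundary configurations), so the dual certificate is a large rational vector, and checking feasibility amounts to verifying a large but finite system of linear inequalities with exact arithmetic. Care is needed because the ``worst-case graph'' constructions underlying the LP involve subtle limiting arguments (vertices whose bias is controlled by adding many auxiliary edges), so one must confirm that the finite LP genuinely captures the infimum over all directed graphs — but this is precisely the content of \cref{thm:antisymmetric-lp} from \cite{FJ15}, which I am entitled to invoke. A secondary subtlety is the discretization error: replacing the true PL sigmoid by a piecewise-constant function changes the ratio, so the breakpoints must be chosen finely enough (and the intercept tuned) that the discretized algorithm still clears the $0.485359$ threshold; this is handled empirically by the search, and the final certificate makes the choice rigorous. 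I would close by remarking that the numerical search over intercepts $b$ (and over discretization granularities) is what localizes the near-optimal choice $b = 149/309$, and that finer discretizations appear to approach a limiting ratio only slightly above the stated bound, consistent with the companion lower bound of $0.4889$.
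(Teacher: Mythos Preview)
Your proposal is correct and follows essentially the same route as the paper: fix the intercept $b=149/309$, discretize $\RoundPLSig_b$ into a piecewise-constant antisymmetric selection function, and invoke the ratio-calculating LP of \cref{thm:antisymmetric-lp} (the paper uses $\ell=251$ positive bias classes with breakpoints $t_i=\tfrac{i}{\ell-1}b$ and midpoint probabilities, then simply reports the LP value computed by the accompanying code). Your additional step of extracting and exactly verifying a rational dual certificate is a welcome rigor upgrade over the paper's ``run the solver and report the number'' argument, but it is a refinement of the same approach rather than a different one; as a minor aside, the relevant companion lower bound for PL sigmoids is $0.486$ (\cref{thm:pl-lb}), not $0.4889$.
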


We also complement this theorem with lower bounds, both for $\RoundPLSig_{1/2}$ itself and for arbitrary PL sigmoid functions:

\begin{restatable}[Lower bound for PL sigmoid selection with $b=1/2$ intercept]{theorem}{thmhalf}\label{thm:intercept-1/2-lb}
    $\alpha(\Obl{\RoundPLSig_{1/2}}) \leq 0.485282$.
\end{restatable}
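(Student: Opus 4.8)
The plan is to exhibit an explicit finite family of directed graphs on which the oblivious algorithm $\Obl{\RoundPLSig_{1/2}}$ performs poorly, and to certify the bound $0.485282$ by computing (a) the exact expected value of the oblivious assignment on each graph, and (b) a lower bound on $\val_G$ for each graph via an explicit good cut. Since $\RoundPLSig_{1/2}$ is a fixed, explicitly-defined piecewise-linear function, every relevant quantity — the bias of each vertex, the probability $\RoundS(\bias_G(v))$, and hence the expected fraction of satisfied edges — is an explicit rational (or piecewise-polynomial) function of the graph's degree sequence, so the analysis reduces to a finite computation once the graphs are chosen.

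Concretely, I would follow the same recipe used to prove the upper bounds (the ratio-calculating LP of \cite{FJ15,Sin23-kand}), but run it in ``reverse'': rather than searching over selection functions, fix $\RoundS = \RoundPLSig_{1/2}$ and search over hard instances. The first step is to reduce the instance space: by the LP-duality/worst-case-instance analysis (cf.\ \cref{thm:antisymmetric-lp}), it suffices to consider graphs built by overlaying a small number of ``gadget'' bipartite-type pieces, parameterized by the biases appearing at their vertices and the relative weights of the pieces; the worst case is attained at a vertex of a low-dimensional polytope. The second step is to set up, for a candidate instance $G$, the expression $\E[\val_{\Obl{\RoundS}}(G)] = \sum_{(u,v)\in E} w_{uv}\,\RoundS(\bias_G(u))\bigl(1-\RoundS(\bias_G(v))\bigr)$ and a matching integral cut $\vecx$ with $\val_G \ge \val_{\vecx}(G)$; the ratio bound is then $\E[\val_{\Obl{\RoundS}}(G)]/\val_{\vecx}(G) \le 0.485282$. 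The third step is to optimize the free parameters (bias values, weights) to drive this ratio down, then round the optimal parameters to nearby rationals and verify the inequality holds exactly with those rational values — this exactness check is what makes the bound rigorous rather than merely numerical.

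The main obstacle is the non-linearity introduced by $\RoundPLSig_{1/2}$ being piecewise \emph{linear} rather than piecewise \emph{constant}: the product $\RoundS(\bias_G(u))(1-\RoundS(\bias_G(v)))$ is bilinear in the selection values, and the biases themselves depend rationally on integer degree counts, so the clean LP of \cite{FJ15} becomes a (low-dimensional but nonconvex) optimization. I would handle this by discretizing: replace $\RoundPLSig_{1/2}$ by a fine piecewise-constant approximation $\RoundS'$ with $\|\RoundS' - \RoundPLSig_{1/2}\|_\infty \le \varepsilon$, observe that this perturbs the expected value of \emph{any} assignment on \emph{any} graph by at most $O(\varepsilon)$ (each edge's contribution changes by at most $2\varepsilon$), run the exact LP for $\RoundS'$ to get a hard instance and a ratio $\le 0.485282 - c\varepsilon$ for a suitable constant $c$, and then transfer the bound back to $\RoundPLSig_{1/2}$ with an explicit $\varepsilon$ chosen small enough that the $O(\varepsilon)$ slack is absorbed. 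The remaining work — identifying the actual worst-case gadget and certifying the final rational inequality — is then a finite, mechanical check, which I expect to carry out with the open-source LP implementation of \cite{Sin23-kand}.
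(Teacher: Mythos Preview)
Your plan is not wrong in spirit, but it is dramatically more elaborate than what is needed, and it defers the actual content of the proof (``identifying the actual worst-case gadget'') to a computer search that you have not yet run. The paper's proof, by contrast, is a two-line calculation on a \emph{two-vertex} graph.

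Here is what the paper does. Take the graph on vertices $\{1,2\}$ with an edge $1\to 2$ of weight $c$ and an edge $2\to 1$ of weight $1$, for a parameter $1<c<3$. Both vertices have bias $\pm\frac{c-1}{c+1}$, which lies in $(-\tfrac12,\tfrac12)$, so $\RoundPLSig_{1/2}$ assigns vertex $1$ to $1$ with probability $p=\tfrac12+\tfrac{c-1}{c+1}$ and vertex $2$ with probability $1-p$. The oblivious cut has expected weight $p^2 c+(1-p)^2$, while the assignment $1\mapsto 1,\,2\mapsto 0$ has weight $c$, so the ratio on this graph is at most $p^2+(1-p)^2 c^{-1}$. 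Minimizing this univariate expression over $c$ (the optimum is at $c=(9+12\sqrt2)/23$) gives exactly $6\sqrt2-8\approx 0.485282$. That is the entire proof.

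Two comments on why your detour is unnecessary here. First, the ``non-linearity obstacle'' you flag --- that $\RoundPLSig_{1/2}$ is piecewise linear rather than piecewise constant --- is a non-issue once you realize the hard instance has only \emph{one} bias magnitude. With a single bias parameter, the selection probability $p$ is an explicit affine function of that bias, and the ratio is an explicit rational function of one variable; no discretization or LP is required, just calculus. Second, your discretize-and-transfer step (``ratio $\le 0.485282-c\varepsilon$, then absorb $O(\varepsilon)$ slack'') presupposes that the LP value for the discretized $\RoundS'$ sits below $0.485282$ by a margin comparable to the discretization error. Empirically this happens to be true (cf.\ \cref{fig:discretization}), but you have no a priori guarantee of it, and the bound $0.485282=6\sqrt2-8$ is in fact sharp for this particular family of instances --- so your margin would have to come from the LP finding a genuinely different (and better) gadget, which you have not exhibited. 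Your plan would likely succeed if executed, but it is a search procedure, not a proof; the paper simply writes down the witness.
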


\begin{restatable}[Lower bound for PL sigmoid selection with arbitrary intercept]{theorem}{thmplsiglb}\label{thm:pl-lb}
    For every $b \in [0,1]$, $\alpha(\Obl{\RoundPLSig_b}) \leq 0.486$.
\end{restatable}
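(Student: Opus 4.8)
The plan is to prove the bound by exhibiting, for every intercept $b \in [0,1]$, an explicit weighted directed graph $G$ on which $\Obl{\RoundPLSig_b}$ performs badly. Writing $v_b(G)$ for the expected fraction of edges of $G$ satisfied by $\Obl{\RoundPLSig_b}$, we have $\alpha(\Obl{\RoundPLSig_b}) = \inf_G v_b(G)/\val_G$, so it suffices to produce, for each $b$, one graph with $v_b(G) \le 0.486 \cdot \val_G$. The only extra difficulty compared with the $b = 1/2$ case of \cref{thm:intercept-1/2-lb} is that $b$ ranges over a continuum, so we cannot simply solve the ratio-computing LP of \cref{thm:antisymmetric-lp} once. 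Two routes suggest themselves: (i) find a \emph{single, $b$-independent} finite family of hard instances $\{G_j\}$, write each $r_{G_j}(b) := v_b(G_j)/\val_{G_j}$ as an explicit function of $b$, and check $\sup_{b\in[0,1]}\min_j r_{G_j}(b) \le 0.486$ directly; or (ii) cover $[0,1]$ by finitely many closed subintervals and, for each subinterval $I$, produce a single witness graph $G_I$ certifying $\alpha(\Obl{\RoundPLSig_b})\le 0.486$ for all $b\in I$. Route (ii) always works; route (i) is cheaper if the instances behind \cref{thm:intercept-1/2-lb} (or a modest enlargement of them) happen to be uniformly hard across all $b$, which is plausible.

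Either route rests on understanding how things depend on $b$. By \cref{def:pl-sig}, for each fixed bias value $\beta \in [-1,+1]$ the map $b \mapsto \RoundPLSig_b(\beta)$ is continuous and piecewise-rational in $b$, with a bounded number of breakpoints — the values of $b$ at which $\beta$ crosses a ``corner'' of the PL sigmoid — and derivative bounded by an explicit constant on any subinterval of $[0,1]$ bounded away from the degenerate endpoints $b \in \{0,1\}$. Fix a graph $G$ whose vertex biases all lie in a finite set $B \subseteq [-1,+1]$. Then
\[
v_b(G) \;=\; \frac{1}{W}\sum_{(u,v)\in E} w_{uv}\; \RoundPLSig_b(\bias_G(u))\,\big(1-\RoundPLSig_b(\bias_G(v))\big),
\]
with $W$ the total edge weight, is a polynomial of degree at most $2$ in the finitely many functions $\{\,b\mapsto\RoundPLSig_b(\beta) : \beta\in B\,\}$; hence $b\mapsto v_b(G)$ is continuous, piecewise-smooth with a bounded number of pieces, and Lipschitz with an explicit constant $L_G$ on the relevant range of $b$. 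Since $\val_G$ is independent of $b$ (and is always at least $\tfrac14$), $r_G(b)$ is Lipschitz with an explicit constant, so a single certificate $r_G(b_0)\le 0.486-\varepsilon$ forces $r_G(b)\le 0.486$ on an explicit neighborhood of $b_0$; alternatively, the piecewise-rational structure lets one locate the finitely many candidate maximizers of $r_G$ over each smooth piece exactly (at rational $b$) and verify the bound there by exact arithmetic, with no numerical slack.

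For route (ii), the search proceeds as follows. For a target subinterval $I$, run the LP-based worst-case-graph search underlying \cref{thm:antisymmetric-lp,thm:sin-ub} at one or a few values $b_0 \in I$ — either directly for $\RoundPLSig_{b_0}$, or for a piecewise-constant over-approximation of it with a controlled sup-norm (and hence controlled ratio) error absorbed into $\varepsilon$ — to obtain a candidate worst-case graph $G_I$. Because optimal LP solutions depend continuously on the selection function, for $I$ short enough this $G_I$ (or a suitable combination of the candidates obtained at a few closely spaced $b_0$) is bad for all of $I$, and the Lipschitz/exact-arithmetic certificate of the previous paragraph confirms $r_{G_I}(b)\le 0.486$ throughout $I$. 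Near the degenerate endpoints $b\in\{0,1\}$, where $\RoundPLSig_b$ collapses toward the majority selection (resp.\ the linear selection $\beta\mapsto\tfrac{1+\beta}{2}$) and $L_G$ may blow up, it is simplest to handle a fixed small neighborhood of the endpoint by a separate crude argument or a single LP solve, since there the achievable ratio is far below $0.486$. Iterating over a finite cover of $[0,1]$ — which the search reveals to be short, since away from the maximizing intercept (near $b\approx 0.48$, consistently with \cref{thm:intercept-1/2-lb,thm:pl-ub}) the ratio drops well below $0.486$ — completes the proof.

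The step I expect to be the main obstacle is exactly the passage from a pointwise certificate (``$G_I$ is bad at $b_0$'') to an interval certificate (``$G_I$ is bad throughout $I$''): one must simultaneously control the direct variation of $v_b(G_I)$ as $\RoundPLSig_b$ changes with $b$ and the drift of the genuine worst-case graph, while $L_{G_I}$ degenerates near $b\in\{0,1\}$. The cleanest resolution is to make the piecewise-rational dependence of $r_{G_I}$ on $b$ fully explicit, so that all relevant extrema sit at finitely many rational $b$-values verifiable by exact arithmetic, and to keep the subintervals short enough that each $G_I$ beats $0.486$ with room to spare — feasible because $0.486$ sits safely above both the $b=1/2$ value $0.485282$ of \cref{thm:intercept-1/2-lb} and the algorithmic value $0.485359$ of \cref{thm:pl-ub}. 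Checking that the resulting finite cover and finitely many witness graphs do the job is then a routine, if laborious, computer-assisted verification.
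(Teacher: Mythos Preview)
Your proposal is correct and takes essentially the same approach as the paper's proof: cover $[0,1]$ by finitely many subintervals (your route~(ii)), produce for each a hard instance via the LP search, and certify the bound across the subinterval by exploiting the explicit piecewise-rational dependence of $r_G(b)$ on $b$ rather than a generic Lipschitz estimate. The paper's execution uses just three pieces --- $b\in[1/2,1]$ via the two-vertex graph of \cref{thm:intercept-1/2-lb} plus a monotonicity argument, $b\in[0.225,1/2]$ via a single $36$-vertex LP-found graph whose ratio is piecewise quadratic in $b^{-1}$, and $b\in[0,0.225]$ via a $b$-parametrized four-vertex graph giving ratio $(c^2-1)/(c^2+1)$ with $c=(1+b)/(1-b)$ --- the last of which is exactly the ``separate crude argument'' near the degenerate endpoint that you anticipated.
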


Note that these two lower bounds hold for PL sigmoid functions themselves (which are continuous), while the upper bound \cref{thm:pl-ub} used a discretization of such a function, so they are not formally comparable. In particular, we observe that given the approximation ratio seems to increase and converge to a limit during finer discretization (see \cref{fig:discretization} below). In light of this, we note that the ratio in this theorem, $0.485282$, is only slightly larger than the ratio from \cref{thm:pl-ub}, $0.485275$. This is heuristic evidence that the selection function $\RoundPLSig_{149/309}$ has a strictly higher approximation ratio than the selection function $\RoundPLSig_{1/2}$.

We prove another, more general, lower bound that holds against antisymmetric selection functions:

\begin{restatable}[Lower bound for symmetric selection]{theorem}{thmsymlb}\label{thm:antisym-lb}
    For every \emph{antisymmetric} selection function $\RoundS$, $\Obl{\RoundS}$ achieves an approximation ratio $\alpha(\Obl{\RoundS}) \leq 0.4889$.
\end{restatable}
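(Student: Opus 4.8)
The plan is to follow the same high-level strategy as the proof of \cref{thm:fj-lb-antisym} (reproved in \cref{sec:fj-lb}), but with a larger, computer-optimized family of hard instances. Since $\alpha(\Obl{\RoundS})$ is an infimum over all directed graphs, it suffices to exhibit a \emph{fixed finite family} $G_1,\dots,G_N$ of edge-weighted digraphs such that for \emph{every} antisymmetric selection function $\RoundS$, at least one $G_i$ satisfies $\mathbb{E}_{\vecx\sim\Obl{\RoundS}}[\val_{G_i}(\vecx)] \le 0.4889\cdot\val_{G_i}$. We take all vertices of all $G_i$ to have bias in a fixed finite ``critical set'' $B=\{0,\pm\beta_1,\dots,\pm\beta_m\}\subset[-1,1]$. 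The crucial reduction is then to finite dimensions. For a digraph $G_i$ all of whose vertices have bias in $B$,
\[ \mathbb{E}_{\vecx\sim\Obl{\RoundS}}[\val_{G_i}(\vecx)] \;=\; \frac{1}{W_i}\sum_{(u,v)\in E_i} w_{uv}\,\RoundS(\bias(u))\,\bigl(1-\RoundS(\bias(v))\bigr), \]
which depends only on the restriction $\RoundS|_B$; and by antisymmetry $\RoundS|_B$ is encoded by the vector $p=(\RoundS(\beta_1),\dots,\RoundS(\beta_m))\in[0,1]^m$ (since $\RoundS(0)=\tfrac12$ and $\RoundS(-\beta_j)=1-p_j$). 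Hence this expectation is an explicit quadratic polynomial $Q_i(p)$; fixing also for each $G_i$ an explicit good cut that lower-bounds $\val_{G_i}\ge\underline{v}_i$, the theorem reduces to the finite-dimensional claim that
\[ \bigl\{\,p\in[0,1]^m : Q_i(p) > 0.4889\,\underline{v}_i \text{ for all } i\,\bigr\} \;=\; \varnothing. \]

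Two things then need to be done. First, one must \emph{find} a good $B$ and family $\{(G_i,\underline{v}_i)\}$. Here I would use the paper's parameterize-and-search philosophy together with the LP of \cref{thm:antisymmetric-lp}: for a near-optimal piecewise-constant $\RoundS$, the optimal ``adversary'' produced by the dual LP is a distribution over a small number of local edge-configurations (each a head-bias/tail-bias pair together with a proposed cut), which one assembles into a small weighted digraph. One then numerically optimizes over the free parameters of such a configuration --- the grid $B$, the edge weights, and an overall ``tilt'' parameter in the spirit of the $\delta$ appearing in the remark after \cref{thm:fj-lb-general} --- so as to push $\max_{p}\min_i Q_i(p)/\underline{v}_i$ as low as possible, and finally snaps the numerics to exact rationals. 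Enriching the two-graph argument behind \cref{thm:fj-lb-antisym} into such a family is exactly what buys the improvement from $0.4899$ to $0.4889$.

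Second, one must \emph{certify} emptiness of the region $R$ above. The cleanest route is a branch-and-bound sweep of the box $[0,1]^m$: recursively subdivide into subboxes, and on each subbox use interval arithmetic to evaluate $Q_i(p)-0.4889\,\underline{v}_i$ for every $i$; discard a subbox as soon as some $Q_i-0.4889\,\underline{v}_i$ is certified $\le 0$ on all of it, and stop once every subbox has been discarded. If the final family is small enough, one can instead present a closed-form S-procedure / sum-of-squares certificate --- nonnegative multipliers $\lambda_i$ together with boundary multipliers witnessing $\sum_i\lambda_i(0.4889\,\underline{v}_i-Q_i(p))\ge 0$ identically on $[0,1]^m$ --- which is checkable by hand or by a computer-algebra system.

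I expect the main obstacle to be the combination of these two steps: the objective $\max_{p\in[0,1]^m}\min_i Q_i(p)/\underline{v}_i$ is a maximin of ratios of quadratics, hence neither convex nor concave, so the search has no clean optimality certificate and one must simultaneously guess the right gadget structure and keep $m$ and $N$ small enough that the interval-arithmetic emptiness proof remains tractable and trustworthy. Shaving anything off the \textcite{FJ15} bound is delicate precisely because their two-constraint bound is already nearly tight against its own constraint set, so the improvement can only come from a family whose constraints couple several critical biases at once.
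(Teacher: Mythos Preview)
Your proposal outlines a correct general strategy, and it is essentially the methodology the paper itself describes in \cref{sec:lb-method}. However, what you have written is a \emph{plan}, not a proof: you never exhibit the graphs, the bias set $B$, or the certificate, so the actual work remains to be done.

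More importantly, you significantly overestimate the complexity required. The paper's proof uses a \emph{single} graph $G$ (not a family $G_1,\dots,G_N$), with eight vertices whose biases lie in $\{-0.1,0,+0.1,+0.2\}$. By antisymmetry there are only two free parameters, $p=\RoundS(0.1)$ and $q=\RoundS(0.2)$, and the oblivious value is a single bivariate quadratic $v(p,q)=53.6175 - 36 p^2 + p (70.02 - 36 q) + 35.01 q - 9q^2$. Maximizing this over $[0,1]^2$ is done by hand: setting $\partial v/\partial p=\partial v/\partial q=0$ gives a line $q=1.945-2p$ on which $v$ is constant and equal to $87.6647$, which is the global maximum and is attained inside the box. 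Dividing by the exhibited cut weight $179.28$ yields $\approx 0.48898$. No branch-and-bound, interval arithmetic, or SOS certificate is needed.

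So the gap in your proposal is not a logical error but a missing realization: the search collapses to $m=2$ and $N=1$, at which point the ``emptiness certification'' you worry about becomes a two-line calculus computation. Your anticipated obstacle --- a nonconvex maximin over a high-dimensional box --- simply does not materialize once the right single instance is found.
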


\cref{thm:antisym-lb} should be compared against \cref{thm:fj-lb-antisym}, which it improves by roughly $0.001$. Finally, we prove a lower bound against arbitrary (not necessarily antisymmetric) selection functions:

\begin{restatable}[Lower bound for general selection]{theorem}{thmgenlb}\label{thm:gen-lb}
    For \emph{every} selection function $\RoundS$, $\Obl{\RoundS}$ achieves an approximation ratio $\alpha(\Obl{\RoundS}) \leq 0.4955$.
\end{restatable}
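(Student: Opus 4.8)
The plan is to mimic the proof of \cref{thm:fj-lb-general} (the Feige--Jozeph general lower bound), but to replace their ad hoc pair of instances with a small, carefully chosen family of hard instances that is optimized by a computer search, exactly as we do in the proof of \cref{thm:antisym-lb}. Concretely, the approximation ratio of any oblivious algorithm is at most $\min_{G} \E[\val(\Obl{\RoundS},G)]/\val_G$, so it suffices to exhibit a finite collection of directed graphs $G_1,\ldots,G_m$ such that \emph{every} selection function $\RoundS$ performs poorly on at least one of them. Since the value $\E[\val(\Obl{\RoundS},G_i)]$ depends on $\RoundS$ only through its values at the (finitely many) biases appearing in $G_i$, the whole problem becomes finite-dimensional: writing $s_j = \RoundS(b_j)$ for the finite set of relevant biases $\{b_j\}$, each $\E[\val(\Obl{\RoundS},G_i)]$ is a low-degree polynomial in the $s_j$'s (degree equal to the max number of distinct vertices on an edge, i.e. at most $2$, since each directed edge $(u,v)$ contributes $s_{b(u)}(1-s_{b(v)})$). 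Our target is then to show
\[
\max_{s_1,\ldots,s_k \in [0,1]} \ \min_{1 \le i \le m} \ \frac{\E[\val(\Obl{\RoundS},G_i)]}{\val_{G_i}} \ \le \ 0.4955 .
\]

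The steps, in order: (1) Choose the hard instances. A natural starting point is the Feige--Jozeph instances used for \cref{thm:fj-lb-general}: a ``main'' instance $G$ on which an antisymmetric $\RoundS$ is already limited to $\approx 0.4899$ (by \cref{thm:fj-lb-antisym}), plus ``imbalance-penalizing'' instances of the form of a directed path or directed cycle that punish selection functions whose value at symmetric biases deviates from $\frac12$ (these force $\RoundS(b)+\RoundS(-b)\approx 1$). One then augments this family with a few extra small gadgets, found by a local search, that close off the remaining slack. (2) For each instance $G_i$, compute $\val_{G_i}$ exactly (these are tiny graphs, so brute force over cuts) and write down the polynomial $p_i(s_1,\ldots,s_k) = \E[\val(\Obl{\RoundS},G_i)]$ symbolically. (3) Solve the min-max. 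Because the map $\RoundS \mapsto (p_1,\ldots,p_m)$ lands in a bounded region and each $p_i$ is a quadratic, I would discretize the cube $[0,1]^k$ on a fine grid, or better, certify the bound via an LP/SDP relaxation: the cleanest rigorous route is to guess the optimal ``algorithm response'' $s^\star$ and the optimal dual weights $\lambda_i \ge 0$, $\sum \lambda_i = 1$, over the instances, and verify that the quadratic $\sum_i \lambda_i(p_i(s) - 0.4955\,\val_{G_i})$ is $\le 0$ for all $s \in [0,1]^k$ --- a verifiable claim about a single explicit low-degree polynomial on a box, dischargeable by checking it is a sum of squares plus box-boundary terms, or simply by interval arithmetic. (4) Conclude $\alpha(\Obl{\RoundS}) \le 0.4955$ for every $\RoundS$.

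The main obstacle is step (1): unlike the antisymmetric case, where antisymmetry collapses $\RoundS$ to a function on $[0,1]$ and the Feige--Jozeph single instance $G$ already does most of the work, in the general case a malicious $\RoundS$ has twice as many degrees of freedom and can, for instance, set $\RoundS \equiv 1$ on a range of biases to ``cheat'' on any one instance; one must therefore include enough instances with conflicting demands that no single response is good everywhere. Getting the constant all the way down to $0.4955$ (versus the $0.4998$ that the bare two-instance argument gives) requires the extra gadgets to be genuinely chosen by search rather than by hand, and then the certificate in step (3) must be assembled and checked for the resulting higher-dimensional polynomial system. A secondary nuisance is bookkeeping: ensuring the finite bias set is closed under negation (since deviations at $b$ and $-b$ interact) and that the discretization/rounding in the computer search does not invalidate the final inequality --- which is why I would present step (3) as the verification of one explicit polynomial inequality rather than as ``the search found it.''

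Finally, I note that the bound here is weaker (i.e.\ the constant is larger) than $0.4889$ precisely because we drop antisymmetry; one should expect, and the proof should make transparent, that the binding constraint is some small ``imbalanced'' gadget on which a non-antisymmetric $\RoundS$ can do slightly better than $0.4889$ but still no better than $0.4955$, while the antisymmetric-type gadget keeps it from doing much better than that. It would be natural to remark that closing the gap between $0.4889$ and $0.4955$ is exactly the question of how much non-antisymmetric selection functions can help, which we leave open.
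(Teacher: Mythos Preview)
Your plan is methodologically sound --- reduce to finitely many bias values, write the oblivious value as a quadratic in the selection probabilities, and bound the max-min --- but it is only a plan, not a proof: you never exhibit the instances, the dual weights $\lambda_i$, or the final polynomial inequality, so as written nothing is actually established.

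More to the point, the paper's proof is dramatically simpler than what you anticipate. It uses a \emph{single} bias value $b=\frac{c-1}{c+1}$ (up to sign), so the whole problem is two-dimensional: the only relevant unknowns are $p=\RoundS(b)$ and $q=\RoundS(-b)$. The hard instance is a $\lambda$-weighted disjoint union of just two tiny graphs --- the two-vertex graph of \cref{fig:twovertex} and the four-vertex graph of \cref{fig:fourvertex} --- with no bias-zero vertices at all. For fixed $\lambda=\tfrac{15}{32}$ and $c=\tfrac98$, the oblivious value is an explicit quadratic in $(p,q)$ which is maximized at an interior point by setting the gradient to zero; dividing by the optimal cut weight gives $\tfrac{4031104}{8135775}\approx 0.4955$. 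No computer search, no SOS certificate, no ``extra gadgets'' are needed. In particular, your assertion that reaching $0.4955$ ``requires the extra gadgets to be genuinely chosen by search rather than by hand'' is incorrect, and your starting point (the Feige--Jozeph instance with bias-zero vertices plus imbalance-penalizing paths/cycles) is heavier than what the paper actually uses.
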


The proof of this theorem is itself a conceptual contribution: The lower bound is witnessed by a single, simple graph, with only one bias (up to sign). In contrast, \cref{thm:fj-lb-antisym} was weaker (by $0.003$), and its proof used multiple graphs, one of which had two distinct biases appearing (up to sign).

\subsection{Motivations}\label{sec:motivations}

\paragraph{Downstream applications.} \textcite{FJ15} were interested in oblivious algorithms both in their own right as a nontrivial class of combinatorial algorithms for $\mdcut$ and because of connections with ``local'' and ``distributed'' models of computation. But more recently, several works have established that the existence of good oblivious algorithms for $\mdcut$ implies the existence of certain kinds of good streaming algorithms for $\mdcut$. These algorithms are given a list of the graph's directed edges as input and must output an estimate of its $\mdcut$ value. In particular, the existence of $\alpha$-approximation oblivious algorithms for $\mdcut$ is known to imply $(\alpha-\epsilon)$-approximation algorithms for all $\epsilon > 0$ in the following models:

\begin{table}[h]
    \centering
    \begin{tabular}{ccccc}
        \textbf{Space} & \textbf{Input ordering} & \textbf{\# passes} & \textbf{Setting} & \textbf{Citation} \\ \hline
        $O(\log n)$                 & Random        & $1$ & Classical & \cite{SSSV23-random-ordering} \\
        $O(\log n)$                 & Adversarial   & $2$ & Classical & \cite{SSSV23-random-ordering} \\
        $O(\sqrt n \polylog n)$     & Adversarial   & $1$ & Classical & \cite{SSSV23-dicut} \\
        $O(\polylog n)$             & Adversarial   & $1$ & Quantum & \cite{kallaugher2023exponential} \\
    \end{tabular}
    \caption{Streaming models into which oblivious algorithms are known to ``translate'', achieving the same approximation ratio up to arbitrarily small constants.}
\end{table}

Thus, further improved oblivious algorithms, like we provide in this paper, imply further improvements in the state-of-the-art for all of these streaming models.

The fact that oblivious algorithms can be ``implemented'' as streaming algorithms in these models is motivated by lower bounds known in some related models. In particular, it was known by a previous result of \textcite{CGV20} that $(4/9-\epsilon)$-approximations to $\mdcut$ can be computed using $O(\log n)$ space in a single, classical, adversarially-ordered pass, while $(4/9+\epsilon)$-approximations require $\Omega(\sqrt n)$ space for a single, classical, adversarially-ordered pass. Thus, the fact that \textcite{FJ15} constructed oblivious algorithms achieving an approximation ratio $0.483 > 4/9$ implied that the \cite{CGV20} lower bound was ``tight'': Adjusting the model to add either polylogarithmically more space, random ordering, a second pass, or access to quantum bits yields strictly better approximations! In contrast, for $\mdcut$'s ``undirected cousin'' $\mcut$, optimal lower bounds are known even for $O(\sqrt n)$-space random-ordering algorithms and $o(n)$-space adversarial-ordering algorithms \cite{KKS15,KK19}. In the quantum setting, \textcite{kallaugher2023exponential} claim that $\mdcut$ is the first discrete optimization problem with a provable exponential separation in complexity between classical and quantum algorithms. These separations are all powered by the existence of oblivious algorithms achieving a ratio strictly above $4/9$. 

\paragraph{Oblivious algorithms for other problems.} Oblivious algorithms for $\mdcut$ have also been studied in several other areas. In mechanism design, \textcite{Luk14} showed that any \emph{monotone}\footnote{An oblivious algorithm is said to be monotone if its corresponding selection function $\RoundS$ is monotone, i.e., $\RoundS(x)\le \RoundS(y)$ if and only if $x\le y$.} oblivious algorithm is a \emph{strategy-proof mechanism}\footnote{A mechanism for $\mdcut$ is defined as follows. There are $m$ players, each given a unique edge of the directed graph. A mechanism is a function that asks each player to reveal their edge and then chooses a (random) assignment of the vertices. The player may or may not reveal their edge truthfully. The utility of each player is the expected probability that their edge is satisfied by this assignment. A mechanism is said to be strategy-proof if the optimal strategy for every player is to be truthful.} for $\mdcut$. \textcite{BFS19} showed applications of oblivious algorithms to the online submodular optimization problem.

\textcite{Sin23-kand} recently extended the definition of oblivious algorithms to a more general version of $\mdcut$ called $\mkand$, where each constraint applies to $k$ variables and specifies a single required bit for each variable. ($\mdcut$ is the special case where $k=2$ and in each constraint, exactly one variable needs to be assigned $1$ and the other $0$.) He also generalized the LP of \cite{FJ15} for calculating the ratio of a piecewise-constant oblivious $\mdcut$ algorithm to $\mkand$, and used this to achieve streaming separation results \`a la \cite{SSSV23-random-ordering}.

\subsection{Structure of rest of the paper}

\cref{sec:preliminaries} contains some of the preliminary background used in the rest of the paper. In \cref{sec:pl-ub}, we discuss our new selection function that achieves an approximation ratio of at least 0.485275. In \cref{sec:plsigmoid_half}, we prove \cref{thm:intercept-1/2-lb} by explicitly constructing a graph (only on two vertices!) for which the $\RoundPLSig_{1/2}$ function achieves a strictly weaker than $0.485282$ approximation. In \cref{sec:lowerbound_plsigmoid}, we prove \cref{thm:pl-lb} by constructing three graphs (one of which is on forty vertices!) and showing that every $\RoundPLSig$ function has an approximation ratio of at most $0.486$ on at least one of these graphs. We also give a detailed description of the linear program that we use to generate these graphs. In \cref{sec:lowerbound_general}, we prove \cref{thm:gen-lb} using a pair of graphs (a two-vertex and a four-vertex graph). Finally, in \cref{sec:lowerbound_antisymmetric}, we prove \cref{thm:antisym-lb} and give a detailed description of the methodology we use to construct the lower bound instance.

\subsection{Code}

All code for this paper is available on GitHub at \url{https://github.com/singerng/oblivious-csps/}.

\section{Preliminaries}\label{sec:preliminaries}

We begin with some basic notations for directed graphs and for oblivious algorithms.

\subsection{Directed graphs}

\begin{definition}[Directed graphs]
    A \emph{(weighted) directed graph} $G$ is a pair $(V,w)$ where $V$ is a set of \emph{vertices} and $w : V \times V \to \BR_{\geq 0}$ is a \emph{weight function} satisfying $w(v,v) = 0$ for all $v \in V$.
\end{definition}

Oftentimes, we are interested in weighted graphs with integer weights. In this setting, it is useful to think of (multi)graphs as pairs $(V,E)$, where $E \subseteq V \times V$ is a multiset of \emph{edges} (again, $(v,v) \not\in V$ for all $v \in V$); the set $E$ induces the weighted graph where $w(v_1,v_2)$ is the multiplicity of $(v_1,v_2)$ in $E$.

\begin{definition}[Degree]
    Let $G = (V, w)$ be a directed graph. For a vertex $v \in V$, the \emph{out-, in-}, and \emph{total degrees} of $v$ are \[ \dout_G(v) \eqdef \sum_{u \in V} w(v,u), \din_G(v) \eqdef \sum_{u \in V} w(u,v), \text{ and } \deg_G(v) \eqdef \dout_G(v) + \din_G(v). \] A vertex $v$ is \emph{isolated} if $\deg_G(v) = 0$.
\end{definition}

We also let $m_G \eqdef \sum_{v_1 \neq v_2 \in V} w(v_1,v_2)$ denote the total weight in $G$.

\begin{definition}[Bias]
    Let $G = (V,w)$ be a graph, and $v \in V$ a nonisolated vertex. Then the \emph{bias} of $v$ is \[ \bias_G(v) \eqdef \frac{\dout_G(v) - \din_G(v)}{\deg_G(v)}. \]
\end{definition}

Observe that $-1 \leq \bias_G(v) \leq +1$.

\subsection{$\mdcut$ and oblivious algorithms}\label{sec:prelim:mdcut}

\begin{definition}[$\mdcut$]
The \emph{maximum directed cut} ($\mdcut$) problem is defined as follows: The input is a directed graph $G = (V,w)$. For any \emph{assignment} (a.k.a. \emph{cut}) $\vecx = (x_v)_{v \in V} \in \{0,1\}^V$, the \emph{$\mdcut$ value} of $\vecx$ is \[ \val_G(\vecx) \eqdef \frac1{m_G} \sum_{v_1 \neq v_2 \in V} w(v_1,v_2) \cdot \1[x_{v_1} = 1 \text{ and } x_{v_2} = 0]. \] The \emph{$\mdcut$ value} of $G$ is \[ \val_G \eqdef \max_{\vecx \in \{0,1\}^V} \val_G(\vecx). \] The goal of the $\mdcut$ problem is to approximate $\val_G$ given $G$.
\end{definition}

Now, we consider algorithms for $\mdcut$ which estimate $\val_G$ using only the biases of vertices in $G$:

\begin{definition}
    An \emph{oblivious algorithm} for $\mdcut$ is defined by a \emph{selection function} $\RoundS : [-1,+1] \to [0,1]$. The corresponding algorithm, denoted $\Obl{\RoundS}$, behaves as follows. Given a directed graph $G = (V,w)$, $\Obl{\RoundS}$ outputs \[ \Obl{\RoundS}(G) := \frac1m \sum_{v_1 \neq v_2 \in V} w(v_1, v_2) \cdot (\RoundS(\bias_G(v_1))) (1-\RoundS(\bias_G(v_2))). \]
\end{definition}

Observe that $\CA^\RoundS(G)$ equals the expected value of the random cut which assigns each (nonisolated) vertex $v$ as an independent $\Bern(\bias_G(v))$ random variable. Thus, $\CA^\RoundS(G) \leq \val_G$. We are interested in how good of an approximation the algorithm provides:

\begin{definition}[Approximation]
    Let $G$ be a graph and $\Obl{\RoundS}$ an oblivious algorithm. The \emph{approximation ratio} of $\Obl{\RoundS}$ on $G$ is \[ \alpha(\Obl{\RoundS};G) \eqdef \frac{\Obl{\RoundS}(G)}{\val_G}. \] The \emph{approximation ratio} of $\Obl{\RoundS}$ is \[ \alpha(\Obl{\RoundS}) \eqdef \inf_{\text{graph } G} \alpha(\Obl{\RoundS};G). \]
\end{definition}

In this paper, we are interested in some specific types of selection functions.

\begin{definition}[Antisymmetry]\label{def:antisym}
    A selection function $\RoundS : [-1, +1] \to [0, 1]$ is \emph{antisymmetric} if $\RoundS(x) = 1 - \RoundS(-x)$ for all $x \in [-1, +1].$
\end{definition}

Notably, if $\RoundS$ is antisymmetric, then $\RoundS(0) = 1/2$. Antisymmetry is a natural desideratum for selection functions for $\mdcut$, since the operation of flipping all edges (i.e., switching to the ``transpose'' weight function $w^\top(v_1,v_2) = w(v_1,v_2)$) preserves the $\mdcut$ value; this operation also negates the bias of every vertex, and so antisymmetry implies that the output of the oblivious algorithm is also preserved.

Another class of selection functions we are interested in is ``piecewise constant'' functions:

\begin{definition}
    A selection function $\RoundS : [-1,+1] \to [0,1]$ is \emph{$\ell$-class piecewise constant} if there exists a partition of the domain $[-1,+1]$ into $\ell$ intervals $I_1,\ldots,I_\ell$, and $\ell$ values $p_1,\ldots,p_\ell$, such that $\RoundS(x) = p_i$ for all $x \in I_i$.
\end{definition}

\subsection{Linear program}

For completeness, we include here the linear program introduced by \cite{Sin23-kand} for calculating the approximation ratio of an antisymmetric piecewise constant selection function. For $\ell \in \mathbb{N}$, let $[\pm\ell] = \{-\ell,\ldots,+\ell\}$.

\begin{theorem}[LP for antisymmetric selection functions]\label{thm:antisymmetric-lp}
    Let $\ell \in \mathbb{N}$. Let $0 \leq t_0 \leq \cdots \leq t_\ell = 1$, and $0 \leq p_1,\ldots,p_\ell \leq 1$. Define intervals $I_{+i} := (+t_{i-1},+t_i]$ and $I_{-i} := [-t_i,-t_{i-1})$ for $i \in [\ell]$ and $I_0 := [-t_0,+t_0]$. Let $\RoundS : [-1,+1] \to [0,1]$ be the antisymmetric selection function which maps $I_{+i}$ to $p_i$ and $I_{-i}$ to $1-p_i$ for $i \in [\ell]$ and $I_0$ to $\frac12$. Then the approximation ratio $\alpha(\Obl{\RoundS})$ achieved by $\Obl{\RoundS}$ equals the value of the following linear program:

\begin{tcolorbox}
\begin{empheq}[left=\empheqlbrace]{alignat*=3}
    & \underset{\{w(\vecc) : \vecc \in C\}}{\mathrm{minimize}} \quad && \sum_{\vecc \in C} p(\vecc) w(\vecc) && \\
    & \mathrm{s.t.} && w(\vecc) \geq 0 && \forall \vecc \in C \\
    & && \sum_{\vecc \in C^+} w(\vecc) = 1 && \\
    & && b_i (W^+(i) + W^-(i)) \leq W^+(i) - W^-(i) \quad && \forall i \in [\pm\ell] \\
    & && W^+(i) - W^-(i) \leq a_i (W^+(i) + W^-(i)) \quad && \forall i \in [\pm\ell]
\end{empheq}
\end{tcolorbox}
where we define the set of $\BN$-valued matrices:
\[ C = \left\{ \vecc \in \BN^{\{\pm1\} \times [\pm \ell]} : \sum_{b \in \{\pm 1\}} \sum_{i \in [\pm \ell]} c_{b,i} = 2 \right\}; \]
the subset of $C$ supported on the $+1$ row:
\[ C^+ = \{ \vecc \in C : c_{-1,i} = 0 \quad \forall i \in [\pm \ell] \}; \]
a constant for each vector in $C$:
\[ p(\vecc) = \left(\frac12\right)^{c(+1,0) + c(-1,0)} \prod_{i \in [\ell]} p_i^{c(+1,+i) + c(-1,-i)} (1-p_i)^{c(+1,-i) + c(-i,+i)}; \]
constants for $i \in [\pm \ell]$:
\[ a_i = \sup I_i \text{ and } b_i \inf I_i; \]
and the linear functions on $\{w(\vecc)\}$, for $i \in [\pm \ell]$: \[ W^+(i) = \sum_{\vecc \in C} c_i^+ w(\vecc) \text{ and } W^-(i) = \sum_{\vecc \in C} c_i^- w(\vecc) . \]
\end{theorem}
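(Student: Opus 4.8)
The plan is to establish both inequalities between $\alpha(\Obl{\RoundS})$ and the LP value via a combinatorial dictionary between directed graphs and feasible LP solutions. The first step is to rewrite $\alpha(\Obl{\RoundS}) = \inf_G \Obl{\RoundS}(G)/\val_G$ as an infimum over \emph{pairs} $(G,\vecx)$ of a graph and an assignment: since $\val_G = \max_{\vecx}\val_G(\vecx)$, we have $\Obl{\RoundS}(G)/\val_G = \min_{\vecx}\Obl{\RoundS}(G)/\val_G(\vecx)$, so $\alpha(\Obl{\RoundS}) = \inf_{(G,\vecx)}\Obl{\RoundS}(G)/\val_G(\vecx)$, and this relaxation is exact because the minimizing $\vecx$ is an optimal cut. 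Working with pairs lets us \emph{plant} the cut: replacing every vertex $v$ with $x_v = 0$ by a negated copy, we may assume $\vecx$ is the all-$1$s cut, at the price of turning each directed edge $(u,v)$ into a symmetric constraint prescribing a target bit on each endpoint (originally $1$ on $u$ and $0$ on $v$, each toggled at a negated endpoint). Crucially, this ``folding'' replaces the bias of a negated vertex by its negation, so --- because $\RoundS$ is \emph{antisymmetric}, $\RoundS(-b) = 1-\RoundS(b)$ --- the probability the oblivious algorithm satisfies an edge becomes a product, over its two endpoints, of $\RoundS$ or $1-\RoundS$ evaluated at the \emph{folded} bias, with $\RoundS$ used at target-bit-$1$ endpoints and $1-\RoundS$ at target-bit-$0$ endpoints. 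This is exactly where antisymmetry enters (a larger LP is needed for general selection functions).

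Next I would encode such a folded instance by the integer matrices $\vecc \in C$: for each edge we place, for each of its two endpoints, a \emph{token} in row $+1$ or $-1$ according to whether the prescribed bit is $1$ or $0$, and in the column indexed by the signed index of the endpoint's folded bias class among $I_{\pm 1},\dots,I_{\pm\ell},I_0$; then $w(\vecc)$ is the total weight of edges of type $\vecc$. Under this dictionary $p(\vecc)$ is precisely the probability the algorithm satisfies a type-$\vecc$ edge --- each token in $(+1,+i)$ or $(-1,-i)$ contributes $p_i$, each token in $(+1,-i)$ or $(-1,+i)$ contributes $1-p_i$, and each token in column $0$ contributes $\tfrac12$ --- so $\sum_{\vecc\in C}p(\vecc)w(\vecc)$ is $m_G$ times the algorithm's value; and the all-$1$s cut satisfies exactly the edges both of whose tokens lie in row $+1$, i.e.\ those with $\vecc\in C^+$, so $\sum_{\vecc\in C^+}w(\vecc) = m_G\cdot\val_G(\vecx)$. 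Normalizing $\sum_{\vecc\in C^+}w(\vecc) = 1$ thus turns the ratio into the linear objective $\sum_{\vecc\in C}p(\vecc)w(\vecc)$. Finally, the inequalities $b_i(W^+(i)+W^-(i)) \le W^+(i)-W^-(i) \le a_i(W^+(i)+W^-(i))$ say exactly that the aggregate folded bias $\frac{W^+(i)-W^-(i)}{W^+(i)+W^-(i)}$ of the half-edges assigned to class $i$ lies in $I_i = (b_i,a_i]$, where $W^+(i)$ (resp.\ $W^-(i)$) is the total $w(\vecc)$-weighted count of type-$\vecc$ tokens in class $i$ with prescribed bit $1$ (resp.\ $0$). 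Hence every pair $(G,\vecx)$ with $\val_G(\vecx) > 0$ yields, after normalization, a feasible LP point whose objective equals $\Obl{\RoundS}(G)/\val_G(\vecx)$, giving the inequality "LP value $\le \alpha(\Obl{\RoundS})$".

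The reverse inequality --- that every feasible LP point is realized, up to arbitrarily small error, by a pair $(G,\vecx)$ with matching ratio --- is the step I expect to be the main obstacle. Given a feasible $w$, one rounds the $w(\vecc)$ to rationals, clears denominators, and then must build an actual directed graph: each type-$\vecc$ edge must be ``unfolded'' by choosing which of its two endpoints to negate, and these per-vertex choices must be simultaneously consistent with all incident edges, a system of parity constraints (one per edge) that becomes trivially satisfiable once we are free to take the underlying constraint graph to be a disjoint union of trees, which is harmless since the LP only sees the aggregate counts $w(\vecc)$. The subtler point is disaggregating each bias constraint, which only pins down the \emph{average} bias of class $i$ to lie in $I_i$, whereas a genuine graph needs \emph{every} vertex of class $i$ to have its bias in $I_i$; I would handle this using the fact that the biases attainable by a degree-$d$ vertex become dense in $[-1,+1]$ as $d\to\infty$, so the class-$i$ half-edges can be split among vertices all of whose biases lie in $(b_i,a_i]$ and average to any target in that interval. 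The one place an exact realization fails is when the LP optimum forces the average bias of some class to equal the open endpoint $b_i$; then one only obtains pairs with ratio \emph{approaching} the LP value, but since $\alpha(\Obl{\RoundS})$ is itself an infimum this still yields $\alpha(\Obl{\RoundS}) \le$ (LP value), completing the proof. (As the excerpt attributes this LP to \cite{Sin23-kand}, following \cite{FJ15}, this argument is essentially a recapitulation of theirs.)
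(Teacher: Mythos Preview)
The paper does not give a proof of this theorem: it is stated in \cref{sec:preliminaries} as a result from \cite{Sin23-kand} (building on \cite{FJ15}), introduced with the phrase ``For completeness, we include here the linear program\ldots'' and followed immediately by the remark that \cite{FJ15} gave a similar LP for general selection functions. So there is no ``paper's own proof'' to compare against.

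That said, your sketch is a faithful reconstruction of the argument in those references. The key steps --- relaxing $\alpha(\Obl{\RoundS})$ to an infimum over pairs $(G,\vecx)$, using antisymmetry to ``fold'' vertices with $x_v=0$ so that the planted cut becomes all-ones (this is precisely where antisymmetry is needed and why the general-selection LP of \cite{FJ15} is larger), recording each edge by the pair of tokens $(b,i)$ encoding literal sign and folded bias class, and reading off the objective, the $C^+$ normalization, and the aggregate bias inequalities --- are all correct. Your interpretation of $W^{\pm}(i)$ as weighted counts of positive/negative literals in class $i$ is the right one, and it makes the constraints $b_i \le (W^+(i)-W^-(i))/(W^+(i)+W^-(i)) \le a_i$ transparent.

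For the reverse direction you correctly flag the two real issues (simultaneous ``unfolding'' of edges into a directed graph, and disaggregating the per-class average bias into per-vertex biases in $I_i$), and your proposed fixes (tree structure for the parity constraints; high-degree vertices so that attainable biases are dense) are exactly the devices used in \cite{FJ15,Sin23-kand}. Your observation that the open endpoint $b_i$ may only be realized in the limit, and that this is harmless because $\alpha(\Obl{\RoundS})$ is itself an infimum, is also correct.
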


We note that \textcite{FJ15} gave an alternative (but highly similar) LP which worked for arbitrary selection functions --- i.e., not just antisymmetric ones --- but required roughly twice as many variables. In this paper, we only run the LP for antisymmetric selection functions, so we use the LP from \cite{Sin23-kand} because of the cost savings.

\section{Improved oblivious algorithms (\cref{thm:pl-ub})}\label{sec:pl-ub}

Our goal in this section is to prove \cref{thm:pl-ub}:

\thmplsigub*

To prove this theorem, we introduce a specific type of antisymmetric selection function we are interested in, namely, an ``\emph{S}-shaped'' piecewise linear function:

\begin{definition}[PL sigmoid functions]\label{def:pl-sig}
    A \emph{piecewise linear \emph{(PL)} sigmoid function} is a selection function of the following form: For an \emph{intercept} parameter $b \in [0,1]$, \[ \RoundPLSig_b(x) = \begin{cases} 0 & x \leq -b \\ 1/2 + \frac{x}{2b} & -b \leq x \leq +b \\ 1 & x \geq +b. \end{cases} \]
\end{definition}

Note that according to $\RoundPLSig_b$, vertices with bias exceeding $b$ in magnitude are assigned deterministically (i.e., they are always assigned to $+1$ or $-1$, depending on their sign), and vertices with smaller bias interpolate linearly between these two extremes. See \cref{fig:pl-sigmoid} for a visual depiction of a PL sigmoid and its discretization.

\def\FIGlo{4}
\def\FIGb{.5}

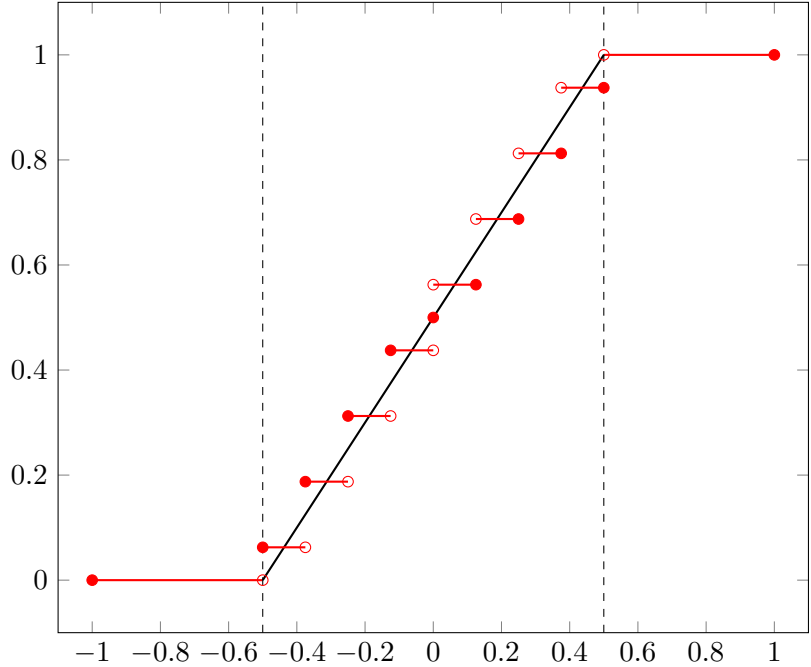
\begin{figure}[t!]
\centering
\begin{tikzpicture}
\begin{axis}[
    width=0.7\textwidth,
    xmin=-1.1, xmax=+1.1,
    ymin=-0.1, ymax=1.1,
    legend pos=south east
]

\addplot[mark=none, red, thick] coordinates {(-1,0) (-\FIGb,0)};

\addplot[only marks, mark=*, color=red] coordinates {(0,1/2)};
\addplot[only marks, mark=o, color=red] coordinates {(\FIGb,1)};
\addplot[only marks, mark=o, color=red] coordinates {(-\FIGb,0)};
\addplot[only marks, mark=*, color=red] coordinates {(1,1)};
\addplot[only marks, mark=*, color=red] coordinates {(-1,0)};

\addplot[mark=none, red, thick] coordinates {(\FIGb,1) (1,1)};

\addplot[mark=none, black, thick] coordinates {(-\FIGb,0) (\FIGb,1)};

\addplot[mark=none, black, dashed] coordinates {(-\FIGb,-.1) (-\FIGb,1.1)};
\addplot[mark=none, black, dashed] coordinates {(\FIGb,-.1) (\FIGb,1.1)};

\foreach \i in {1,...,\FIGlo} {
    \pgfmathsetmacro{\smallx}{((\i-1) / \FIGlo * \FIGb}
    \pgfmathsetmacro{\bigx}{\i / \FIGlo * \FIGb}
    \pgfmathsetmacro{\samey}{1/2 + 1/8 * (2 * \i - 1) / \FIGlo / \FIGb}

    \addplot[only marks, mark=o, color=red] coordinates {(\smallx,\samey)};
    \addplot[only marks, mark=*, color=red] coordinates {(\bigx,\samey)};
    \addplot[mark=none, red, thick] coordinates {(\smallx,\samey) (\bigx, \samey)};

    \addplot[only marks, mark=o, color=red] coordinates {(-\smallx,1-\samey)};
    \addplot[only marks, mark=*, color=red] coordinates {(-\bigx,1-\samey)};
    \addplot[mark=none, red, thick] coordinates {(-\smallx,1-\samey) (-\bigx, 1-\samey)};
}

\end{axis}
\end{tikzpicture}
\caption{The step function $\RoundPLSig_{1/2}$ and its discretization into $\ell=5$ positive bias classes. The discretization is the function in red. The jump discontinuities are notated using standard marks: Open circles are open interval ends and closed circles are closed interval ends. The continuous (non-discretized) function $\RoundPLSig_{1/2}$ disagrees with its discretization only within the interval $[-1/2,+1/2]$ (marked by the vertical dashed line segments). The continuous function is represented by the black line segment within this interval.}\label{fig:pl-sigmoid}
\end{figure}

In the prior works \cite{FJ15,Sin23-kand}, the highest approximation ratios achieved by oblivious algorithms were found by using discretized versions of $\RoundPLSig_{1/2}$. These discretizations --- in the antisymmetric case --- are controlled by a parameter $\ell$, denoting the number of \emph{positive} bias classes. (There are $L = 2\ell+1$ bias classes in general.) \textcite{Sin23-kand} used $\ell=200$, though it appears that his discretization was uniform and so it split the interval $[1/2,1]$ into $\approx 100$ bias classes, and we condense them into only one bias class. We also use a finer discretization with $\ell = 251$. Most importantly, we use a different intercept parameter $b$. To choose $b$, we performed a binary search using a discretization with $\ell = 51$. We found that an intercept at $b = 149/309$ gives the best bound among the intercepts we inspected.

\begin{proof}[Proof of \cref{thm:pl-ub}]
     We set $b=149/309$ and use a discretization of $\RoundPLSig_b$ with $\ell = 251$ classes. We plug this discretization into the linear program for calculating approximation ratios for antisymmetric functions (\cref{thm:antisymmetric-lp}). Ours is the natural discretization: We set $t_i = \frac{i}{\ell-1} b$ for $i \in \{0,\ldots,\ell-1\}$, and then $t_{\ell} = 1$.\footnote{{\`A la} \cref{thm:antisymmetric-lp}, this creates $L = 2\ell+1 = 503$ bias classes, labeled $I_{-\ell},\ldots,I_{+\ell}$. In particular, $I_0 = [0,0]$, $I_{+i} = (\frac{i-1}{\ell-1}b,\frac{i}{\ell-1}b]$ for $i \in [\ell-1]$, and $i_{+\ell} = (b,1]$, and $I_{-i}$ is defined symmetrically.} We set $p_i = \frac{\RoundPLSig_b(\frac{i-1}{\ell-1})+\RoundPLSig_b(\frac{i}{\ell-1})}2$ for $i \in [\ell-1]$, and $p_{\ell} = 1$. Evaluating the linear program in \cref{thm:antisymmetric-lp} using the code in \href{https://github.com/singerng/oblivious-csps/blob/main/figures/new_algorithm.py}{\texttt{figures/new\_algorithm.py}} in the source repository, we deduce $\alpha(\Obl{\RoundPLSig_b}) \geq 0.485275$.
\end{proof}

\cref{fig:discretization} below depicts the effect of increasingly fine discretization on approximation ratio. We suspect that even further improvements are possible by running the LP for an even finer discretization, but we do not know how to calculate the limit of infinitely fine discretization (i.e., the value of the actual step function).

\begin{figure}[t!]
\centering
\centering
\begin{tikzpicture}
\begin{axis}[
    width=0.7\textwidth,
    xmin=15, xmax=185,
    ymin=0.481, ymax=0.486,
    legend pos=south east,
    ymajorgrids=true,
    grid style=dashed,
    yticklabel style={/pgf/number format/.cd,fixed,precision=3},
]
\addplot[
color=blue,
mark=*,
]
coordinates {
(21,0.48146963515320246)
(26,0.48230862190277046)
(31,0.48286615857269166)
(36,0.4832838645305919)
(41,0.4835778991571159)
(46,0.4838125624254632)
(51,0.48400830923060223)
(56,0.484155415561258)
(61,0.48428574913842404)
(66,0.48439634968745304)
(71,0.4844861046992611)
(76,0.4845691122920202)
(81,0.4846392191975384)
(86,0.4847004759629851)
(91,0.4847580717865758)
(96,0.48480589604747737)
(101,0.4848506835384357)
(101,0.4848506835384357)
(111,0.48492755819072697)
(121,0.48499272587885456)
(131,0.48504691697997065)
(141,0.48509294540985626)
(151,0.4851334408237919)
(161,0.4851693891081909)
(171,0.4852001349379196)
(181,0.4852276143315759)
};
\addlegendentry{$\RoundPLSig_{149/309}$};
\addplot[
color=red,
mark=square*,
]
coordinates {
(21,0.4811046511627907)
(26,0.48187106918239)
(31,0.4824218750000001)
(36,0.48283884660421544)
(41,0.48316176470588235)
(46,0.48339120370370375)
(51,0.4835683741844472)
(56,0.48372183372183375)
(61,0.48385634527793275)
(66,0.48397435897435903)
(71,0.4840604026845638)
(76,0.48413803721174)
(81,0.4842095588235294)
(86,0.4842757391251047)
(91,0.48433421047847586)
(96,0.4843798853569568)
(101,0.48442358158384274)
(111,0.4845046593367057)
(121,0.4845669934640523)
(131,0.48462235228539574)
(141,0.4846698750851436)
(151,0.4847095908898257)
(161,0.4847472458470999)
(171,0.484776763891152)
(181,0.48480588575886313)
};
\addlegendentry{$\RoundPLSig_{1/2}$};
\end{axis}
\end{tikzpicture}
\caption{A plot depicting how the fineness of discretization affects the approximation ratio calculated by the linear program of \cref{thm:antisymmetric-lp}, for two continuous selection functions: $\RoundPLSig_{1/2}$ and $\RoundPLSig_{149/309}$. Each point represents the approximation ratio of some oblivious algorithm, as calculated by the linear program in \cref{thm:antisymmetric-lp}. The horizontal axis records the number of bias classes (up to sign, i.e., as in \cref{thm:antisymmetric-lp}), and the vertical axis records the calculated approximation ratio. This plot was produced by \href{https://github.com/singerng/oblivious-csps/blob/main/figures/discretization.py}{\texttt{figures/discretization.py}} in the source code.}\label{fig:discretization}
\end{figure}
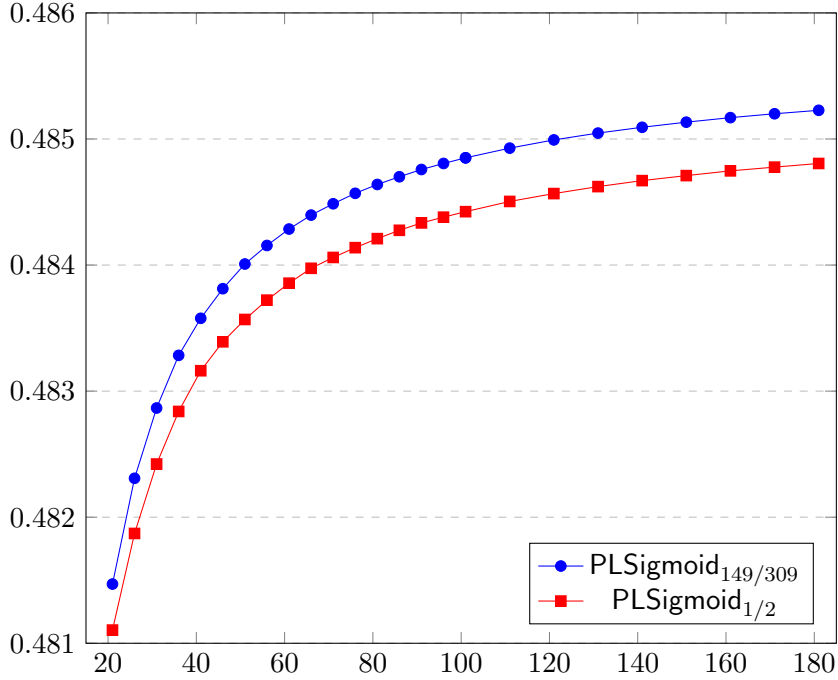

\section{Lower bounds}

In this section, we present a number of lower bounds against oblivious algorithms. These bounds hold at varying levels of ``granularity'': Some are only against individual selection functions (like $\RoundPLSig_{1/2}$, constructed in the last section), while others hold against all PL sigmoid functions, all antisymmetric functions, or even all functions. However, the proofs of these theorems are united by an underlying methodology for producing candidate lower bound graphs, which combines grid search and linear programming. We discuss this methodology next before turning to the proofs of specific bounds.

\subsection{Methodology for finding hard instances}\label{sec:lb-method}

First, we describe our general strategy in this section for finding graphs that are hard for oblivious algorithms. This description is not strictly necessary for our proofs: One may simply take the explicit graphs we produce and verify that they have the desired lower bound properties. Also, some of the graphs are simple enough that one could have invented them by hand. However, we include this description to shed some light on where the more complex graphs come from. Also, when we can indeed prove such ``uniform'' lower bounds, the resulting quantitative bound will hold even against ensembles of oblivious algorithms.

Ideally, given a set $\CC$ of oblivious algorithms, we would like to produce a graph $G$ and show that all algorithms in $\CC$ perform poorly on $G$. This is, in some sense, ``dual'' to the problem of finding good oblivious algorithms, where we want a single oblivious algorithm that performs well on all graphs. In general, we do not have strong evidence as to whether or not this dual procedure is \emph{tight}, i.e., whether we can certify optimal bounds on the performance of algorithms for the classes we're interested in by coming up with a single hard graph $G$. Indeed, some of our proofs instead consider a small set of graphs and prove that any oblivious algorithm in $\CC$ must perform poorly on at least one of these algorithms; we do not know whether this potentially stronger proof method is tight, either.

Let $L \in \BN$, $-1 \leq t_1 < \cdots < t_L \leq +1$, and $\CP \subseteq [0,1]^L$. We will be interested in graphs that contain only vertices of biases $t_1,\ldots,t_L$. For any such graph $G$, we can view a vector $\vecp \in \CP \subseteq [0,1]^L$ as an oblivious algorithm: Each vertex of bias $i$ is assigned to $1$ w.p. $p(i)$. Our LP will output the graph which is ``worst-case'' for all algorithms in $\CP$ (i.e., such that the performance of the best algorithm in $\CP$ is minimized), among all graphs with vertices of bias $t_1,\ldots,t_L$.

\begin{tcolorbox}\begin{empheq}[left=\empheqlbrace]{alignat*=3}
    & \underset{\{w(v_1,v_2) : v_1,v_2 \in V\}}{\mathrm{minimize}} \quad && \max_{\vecp \in \CP} \sum_{v_1 = (b_1,i_1), v_2 = (b_2,i_2) \in V} p(i_1) (1-p(i_2)) w(v_1,v_2) && \\
    & \mathrm{s.t.} && w(v_1,v_2) \geq 0 && \forall v_1,v_2 \in V \\
    & && \sum_{i_1,i_2 \in [L]} w((1,i_1),(0,i_2)) = 1 && \\
    & && t_i (W^+(i) + W^-(i)) = W^+(i) - W^-(i) && \forall v = (b,i) \in V
\end{empheq}
\end{tcolorbox}

where we define the linear functions
\[ W^+(v_1) = \sum_{v_2 \in V} w(v_1,v_2) \text{ and } W^-(v_1) = \sum_{v_2 \in V} w(v_2,v_1). \] (While the objective, as written, is the maximum of a finite number of linear functions, this can be converted to a standard-form LP by using the standard trick which introduces one additional variable.) Note that every feasible solution to this linear program is a weighted, directed graph on $V$ where:
\begin{enumerate}
    \item The cut from $\{1\} \times [\pm\ell]$ to $\{0\} \times [\pm\ell]$ satisfies weight $1$.
    \item The vertices $(0,i)$ and $(1,i)$ have bias $t_i$ (if they are nonisolated) for $i \in [\pm\ell]$.
    \item For $\vecp \in [0,1]^L$, the quantity $\sum_{v_1 = (b_1,i_1), v_2 = (b_2,i_2) \in V} p(i_1) (1-p(i_2)) w(v_1,v_2)$ calculates the expected weight cut by the oblivious algorithm $\vecp$.
\end{enumerate}
That this suffices to produce the worst-case graph is essentially a consequence of some reasoning due \textcite{FJ15} for a similar LP; the idea is to ``condense'' vertices into $2L$ equivalence classes based purely on their bias and their assignment in an optimal cut, and also to rescale so that the optimal cut has weight $1$.

However, to find lower bounds against concrete classes of selection functions, two issues remain. Firstly, we need to choose the biases $b_1,\ldots,b_L$. We typically do this by either fixing a small value of $L$ and performing a grid search, or by picking uniformly spaced points. Secondly, in the case where we desire to prove a bound against a large or infinite set $\CP^*$ of selection functions --- e.g., the set of all antisymmetric selection functions, or PL sigmoid selection functions --- we need to reduce the set of selection functions to a tractable size. We typically do this by discretizing the space of all selection functions in some form to form some small subset $\CP$. This does cause some additional loss in the approximation ratio, which will ideally be small. To carry out the proof, we first produce a candidate graph $G$ which holds against $\CP$, and then (ideally) show that it holds almost as well against $\CP^*$ by directly solving the maximization problem over $\CP^*$. In the case where $\CP^*$ is the set of all antisymmetric functions (or all functions), maximizing the weight of the assignment over $\CP^*$ corresponds to solving an $L$-variate quadratic optimization problem.

\subsection{Bounds against $\RoundPLSig_{1/2}$ (\cref{thm:intercept-1/2-lb})}\label{sec:plsigmoid_half}

We have the following bound on $\RoundPLSig_{1/2}$:

\thmhalf*

\def\FIGpackx{3}
\def\FIGpacky{3}
\def\FIGbend{25}
\def\FIGsubfigwidth{0.45\textwidth}

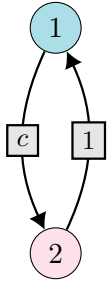
\begin{figure}[h]
\centering
\begin{tikzpicture}
    \node[draw, circle, fill=\FIGcolpos] (v1) at (0,\FIGpacky) {1};
    \node[draw, circle, fill=\FIGcolneg] (v2) at (0,0) {2};

    \path[diredge] (v1) edge[bend right=\FIGbend] node[weight] {$c$} (v2);
    \path[diredge] (v2) edge[bend right=\FIGbend] node[weight] {$1$} (v1);
\end{tikzpicture}
\caption{\emph{Parameter:} $c > 1$. The \FIGcolposdesc vertex ($1$) has bias $+\frac{c-1}{c+1}$. The \FIGcolnegdesc vertex ($2$) has bias $-\frac{c-1}{c+1}$. The assignment $1\to1,2\to0$ satisfies weight $c$. An oblivious assignment $1\to p,2 \to q$ satisfies weight $p (1-q) c + q (1-p)$. (Every two-vertex graph (without self-loops and with two nontrivial edges) is isomorphic to this graph up to rescaling.)}\label{fig:twovertex}
\end{figure}

\begin{proof}
Let $G$ denote the graph in \cref{fig:twovertex} with $1 < c < 3$ TBD. Vertex $1$ has bias $+\frac{c-1}{c+1}$. Since $c < 3$, $\frac{c-1}{c+1} < \frac12$. Thus, $\RoundPLSig_{1/2}$ assigns vertex $1$ to $1$ w.p. $p := \frac12+\frac{c-1}{c+1}$, and $2$ to $1$ w.p. $1-p$, satisfying weight \[ p^2 c + q^2 = \left(\frac12+\frac{c-1}{c+1}\right)^2 c + \left(\frac12-\frac{c-1}{c+1}\right)^2. \] Thus, $\RoundPLSig_{1/2}$ achieves ratio at most \[ p^2 + q^2 c^{-1} = \left(\frac12+\frac{c-1}{c+1}\right)^2 + \left(\frac12-\frac{c-1}{c+1}\right)^2 c^{-1} \] for all $1 < c < 3$. Setting\footnote{This value of $c$ gives the minimum possible bound (though this fact is not necessary).} $c = (9+12\sqrt{2})/23$, the expression is equal to $6\sqrt 2- 8 \approx 0.485282$, as desired. 
\end{proof}

\subsection{Lower bound for PL sigmoid functions (\cref{thm:pl-lb})}\label{sec:lowerbound_plsigmoid}

\thmplsiglb*

\begin{proof}[Proof of \cref{thm:pl-lb}]
We prove this theorem by considering three separate cases based on the value of the intercept $b$ of $\RoundPLSig_b$:

\paragraph*{Case 1: $1/2\le b\le 1$.} For this case, we consider the graph from \cref{thm:intercept-1/2-lb} with $c = 1.12916$.

It follows from the proof of \cref{thm:intercept-1/2-lb} that $\RoundPLSig_{1/2}$ achieves an approximation ratio of at most $0.485282$ on this graph. The approximation ratio of $\RoundPLSig_b$ on this graph is given by
    $$
    \frac{1.12916p^2+(1-p)^2}{1.12916} \, ,
    $$
where $p=\RoundS(b_1)$ and $b_1 = 0.12916/2.12916$.
We now argue that $1/2\le b \le 1$ implies that $\RoundPLSig_b$ gets an approximation ratio of at most $0.485282$ on this graph. Observe that the derivative of $1.12916p^2+(1-p)^2$ is $0$ at $1/2.12916\approx 0.46967$. Hence in the range $0.46967\le p \le 1$, $1.12916p^2+(1-p)^2$ is an increasing function of $p$. By the definition of $\RoundPLSig_b$, we have that $p = 1/2 + b_1/2b$ when $b\ge 1/2$. Hence, for $1/2\le b \le 1$, the largest possible approximation is achieved at $b=1/2$ and we already established that this is at most $0.485282$.
 
\paragraph*{Case 2: $0.225\le b\le 1/2$.} Let $b_i = -0.475 + 0.05(i-1)$ for all $i \in \{1, \dots, 20\}$. \newcommand{\GLp}{G_{\mathrm{LP}}}

We describe a concrete graph $\GLp$, which we visualize in the next figure.\footnote{This graph was originally calculated by the LP paradigm described in \cref{sec:lb-method}. $\CP$ consisted of the single rounding function $\RoundPLSig_{149/309}$. The particularly simple structure of the found solution let us describe it analytically, as we do here.} $\GLp$ has $36$ vertices, which we label $\{1,\ldots,18\}$ and $\{3',\ldots,20'\}$. The graph has the property that for each $i \in \{1,\ldots,20\}$, the vertices $i$ and $i'$ have bias $b_i$. (We treat the biases of the nonexistent vertices $1'$, $2'$, $19$, and $20$ as vacuous.) Further, $\GLp$ has the following very simple unweighted edge structure:
\begin{enumerate}
    \item For each $i \in \{1,\ldots,18\}$, there is an edge $i \to (i+2)'$.
    \item For each $i \in \{1,\ldots,17\}$, there is an edge $(i+3)' \to i$.
    \item There are edges $3' \to 1$ and $20'$ to $18$. (Note that $3'$ and $20'$ have no edges of type (2).)
\end{enumerate}

One may verify that the bias constraints and the edge structure together determine every edge weight up to vertex rescaling: E.g., if a vertex $v$ has bias $b$ and indegree $w$, then it must have outdegree $w\frac{1+b}{1-b}$.

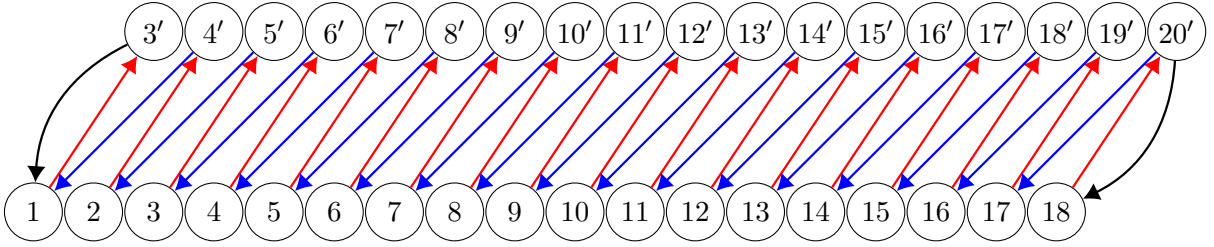
\begin{figure}[t!]
\centering
\begin{tikzpicture}[>=stealth, scale=0.8]
    \foreach \i in {1,...,18} {
        \node[draw, circle, inner sep=2pt, minimum size=2em] (v0\i) at (\i, 0) {\(\i\)};
    }

    \foreach \i in {3,...,20} {
        \node[draw, circle, inner sep=2pt, minimum size=2em] (v1\i) at (\i, 3) {\(\i'\)};
    }

    Draw directed edges (0, i) -> (1, i+2)
    \foreach \i [evaluate=\i as \j using int(\i+2)] in {1,...,18} {
        \draw[diredge, red] (v0\i) -- (v1\j);
    }

    \foreach \i [evaluate=\i as \j using int(\i-3)] in {4,...,20} {
        \draw[diredge, blue] (v1\i) -- (v0\j);
    }

    \path[diredge, black] (v13) edge[bend right=30] (v01);
    \path[diredge, black] (v120) edge[bend left=30] (v018);
\end{tikzpicture}
\caption{The (unweighted version) of the graph used in Case 2 of the proof of \Cref{thm:pl-lb}.}
\end{figure}

Since $\GLp$ is fixed, for any intercept $b \in (0,1]$, $\val_{\GLp}(\RoundPLSig_b)$ is a quadratic function of the selection probabilities $\{\RoundPLSig_b(v)_{v \in V(G)}\}$. Further, recall that vertices in $\GLp$ have biases $\{b_1,\ldots,b_{20}\}$, so we are interested in the selection probabilities $\{\RoundPLSig_b(b_i)\}_{i \in [20]}$. For each bias $b_i$, if $b_i \leq -b,$ then $\RoundPLSig_b(b_i) = 0$ and if $b_i \geq b,$ then $\RoundPLSig_b(b_i) = 1.$ Otherwise, we can write
\[
\RoundPLSig_b(b_i) = \frac{b_i+b}{2b} = \frac12 + \frac{b_i}{2b},
\]
a linear function in $b^{-1}$. Hence over any interval $b \in [b_i,b_{i+1}]$, $\val_{\GLp}(\RoundPLSig_b)$ is a quadratic function of $b^{-1}$. We can therefore find the maximum $b^*$ of this quadratic function; if $b^*$ is in the interval, we are done, otherwise we take the maximum value over $b \in \{b_i,b_{i+1}\}$.

We perform the explicit calculations for the edge-weights and the oblivious ratio in Mathematica (\href{https://github.com/singerng/oblivious-csps/blob/main/plsigmoid_lb.nb}{\texttt{plsigmoid\_lb.nb}} in the source repository) since the rational numbers have many digits. The maxima over the intervals $[b_{20},1/2]$, $[b_{19}, b_{20}]$, $[b_{18},b_{19}]$, $[b_{17},b_{18}]$, $[b_{16},b_{17}]$, and $[b_{15},b_{16}]$ rounded up in the sixth decimal place are, respectively, $0.485895$, $0.485870$, $0.485488$, $0.484375$, $0.482019$, and $0.477739$; all are less than $0.486$.

\paragraph*{Case 3: $0\le b\le 0.225$.} Now, consider the graph shown in Figure \ref{fig:fourvertex} with $c=\frac{1+b}{1-b}$. Vertices $1$ and $2$ have bias $b$ and vertices $3$ and $4$ have bias $-b$. By the definition of $\RoundPLSig_b$, it assigns $\{1,2\}\to 1, \{3,4\}\to 0$ and the corresponding cut has value $c^2-1$. On the other hand, the assignment $\{1,3\} \to 1, \{2,4\}\to0$ satisfies weight $c^2-1 + 2 \cdot 1= c^2+1$. Hence, the approximation ratio is at most $(c^2-1)/(c^2+1).$ For $b\le 0.225$, this value is less than $0.486$.
    
We conclude that no PL sigmoid function can achieve an approximation ratio of $0.486,$ as desired.
    \end{proof}

\section{Lower bound for arbitrary selection functions (\cref{thm:gen-lb})}\label{sec:lowerbound_general}

In this section, we prove \cref{thm:gen-lb}:

\thmgenlb*

This gives a lower bound on the ratio achievable by \emph{any} (not necessarily antisymmetric) selection function. We use the graph in \cref{fig:twovertex} above as well as \cref{fig:fourvertex} below:

\def\FIGpackx{3}
\def\FIGpacky{3}
\def\FIGbend{25}
\def\FIGsubfigwidth{0.45\textwidth}

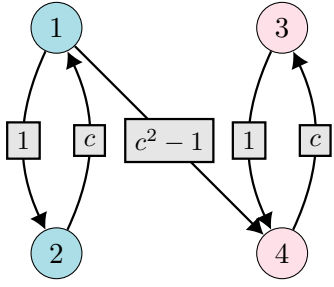
\begin{figure}[h]
\centering
\begin{tikzpicture}
    \node[draw, circle, fill=\FIGcolpos] (v3) at (\FIGpackx,\FIGpacky) {1};
    \node[draw, circle, fill=\FIGcolpos] (v4) at (\FIGpackx,0) {2};
    \node[draw, circle, fill=\FIGcolneg] (v5) at (2*\FIGpackx,\FIGpacky) {3};
    \node[draw, circle, fill=\FIGcolneg] (v6) at (2*\FIGpackx,0) {4};
    
    \path[diredge] (v3) edge[bend right=\FIGbend] node[weight] {$1$} (v4);
    \path[diredge] (v4) edge[bend right=\FIGbend] node[weight] {$c$} (v3);
    \path[diredge] (v5) edge[bend right=\FIGbend] node[weight] {$1$} (v6);
    \path[diredge] (v6) edge[bend right=\FIGbend] node[weight] {$c$} (v5);
    \path[diredge] (v3) edge node[weight] {$c^2-1$} (v6);
\end{tikzpicture}
\caption{\emph{Parameter:} $c>1$. The \FIGcolposdesc vertices ($\{1,2\}$) have bias $+\frac{c-1}{c+1}$ (note that $\frac{(c^2-1)+1-c}{(c^2-1)+1+c} = \frac{c(c-1)}{c(c+1)} = \frac{c-1}{c+1}$). The \FIGcolnegdesc vertices ($\{3,4\}$) have bias $-\frac{c-1}{c+1}$. The assignment $\{1,3\} \to 1, \{2,4\}\to0$ satisfies weight $c^2-1 + 2 \cdot 1= c^2+1$. An oblivious assignment $\{1,2\} \to p,\{3,4\}\to q$ satisfies weight $p(1-p)(c+1) + q(1-q) (c+1) + p(1-q) (c^2-1)$.}\label{fig:fourvertex}
\end{figure}

\begin{proof}
    Let $0 \leq \lambda < 1$ and $c > 1$ be parameters. Consider a graph $G$ consisting of disjoint copies of the graphs in \cref{fig:twovertex,fig:fourvertex} weighted by $\lambda$ and $1-\lambda$, respectively. Examining these graphs, we see that all vertices in $G$ have bias $\pm \frac{c-1}{c+1}$. Further, there exists a cut cutting weight
    \begin{equation}\label{eq:gen-lb:opt-wt}
        \lambda c + (1-\lambda) (c^2+1)
    \end{equation} while an oblivious assignment assigning the positive-bias vertices to $1$ w.p. $p$ and the negative-bias vertices to $1$ w.p. $q$ achieves value
    \begin{equation}\label{eq:gen-lb:obl-wt}
    \lambda (p (1-q) c + q (1-p)) + (1-\lambda) p(1-p)(c+1) + q(1-q) (c+1) + p(1-q) (c^2-1).
    \end{equation}

    At $\lambda=\frac{15}{32}$ and $c = \frac98$,\footnote{These fractions were suggested by numerical search on a computer; they are almost certainly non-optimal, but we include fractions so that the calculations in this proof can be checked exactly. For simplicity's sake, we mention that a ratio strictly below $\frac12$ (and indeed, better than that of \cref{thm:fj-lb-general}) is achieved by redoing this calculation at $\lambda=\frac13$ and $c=\frac54$.} the quantity in \cref{eq:gen-lb:obl-wt} is optimized at $(p,q)=(\frac{1352}{2295},\frac{943}{2295})$. Dividing this value by the value of \cref{eq:gen-lb:opt-wt} at the same values of $\lambda$ and $c$, we arrive at a ratio of $\frac{4031104}{8135775}\approx 0.4955$.
\end{proof}

\section{Lower bounds for antisymmetric selection functions (\cref{thm:antisym-lb})}\label{sec:lowerbound_antisymmetric}

In this section, we prove \cref{thm:antisym-lb}:

\thmsymlb*

This theorem improves on the bound of \textcite{FJ15} (see \cref{thm:fj-lb-antisym} above). The proof uses the following single graph:

\def\xsep{4.5}
\def\ysep{5}

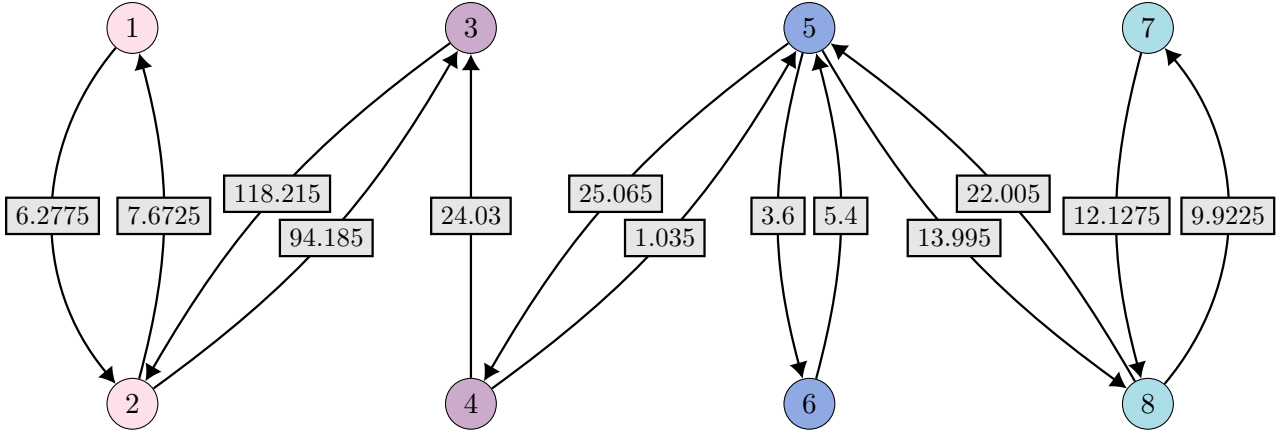
\begin{figure}
\centering
\begin{tikzpicture}
    \node[draw, circle, fill=\FIGcolneg] (1) at (0, \ysep) {$1$};
    \node[draw, circle, fill=\FIGcolzero] (3) at (\xsep, \ysep) {$3$};
    \node[draw, circle, fill=\FIGcolpos] (7) at (3*\xsep, \ysep) {$7$};
    \node[draw, circle, fill=\FIGcolsuperpos] (5) at (2*\xsep, \ysep) {$5$};
    \node[draw, circle, fill=\FIGcolneg] (2) at (0, 0) {$2$};
    \node[draw, circle, fill=\FIGcolzero] (4) at (\xsep, 0) {$4$};
    \node[draw, circle, fill=\FIGcolpos] (8) at (3*\xsep, 0) {$8$};
    \node[draw, circle, fill=\FIGcolsuperpos] (6) at (2*\xsep, 0) {$6$};

    \draw[diredge] (1) to[bend right=40] node[weight] {6.2775} (2);
    \draw[diredge] (2) to[bend right=15] node[weight] {7.6725} (1);
    \draw[diredge] (7) to[bend right=15] node[weight] {12.1275} (8);
    \draw[diredge] (8) to[bend right=40] node[weight] {9.9225} (7);
    \draw[diredge] (5) to[bend right=15] node[weight] {3.6} (6);
    \draw[diredge] (6) to[bend right=15] node[weight] {5.4} (5);
    \draw[diredge] (2) to[bend right=12] node[weight] {94.185} (3);
    \draw[diredge] (3) to[bend right=12] node[weight] {118.215} (2);
    \draw[diredge] (8) to[bend right=12] node[weight] {22.005} (5);
    \draw[diredge] (5) to[bend right=12] node[weight] {13.995} (8);
    \draw[diredge] (4) to[bend right=12] node[weight] {1.035} (5);
    \draw[diredge] (5) to[bend right=12] node[weight] {25.065} (4);
    \draw[diredge] (4) -- node[weight] {24.03} (3);
\end{tikzpicture}
\caption{Graph giving an improved lower bound against antisymmetric selection functions. The \FIGcolnegdesc vertices ($1$ and $2$), the \FIGcolzerodesc vertices ($3$ and $4$), the \FIGcolposdesc vertices ($7$ and $8$), and the \FIGcolsuperposdesc vertices ($5$ and $6$) have biases $-0.1$, $0$, $+0.1,$ and $+0.2$, respectively. The cut assigning $\{1,3,5,7\} \to 1$ and $\{2,4,6,8\} \to 0$ has weight $6.2775+118.215+25.065+3.6+13.995+12.1275 = 179.28$. An antisymmetric oblivious cut assigning $\{1,2\}\to 1-p$, $\{3,4\} \to \frac12$, $\{5,6\} \to q$, and $\{7,8\} \to p$ has value $(6.2775 + 7.6725)p(1-p) + 94.185 \cdot \frac12(1-p)+ 118.215 \cdot \frac12 p + 24.03 \cdot \frac14 + 1.035 \cdot \frac12 (1-q) + 25.065\cdot \frac12 q + 22.005 p(1-q)+13.995q(1-p)+ (3.6+5.4) q(1-q) + (12.1275+9.9225) \cdot p(1-p) = 53.6175 - 36 p^2 + p (70.02 - 36 q) + 35.01 q - 9q^2$.
}
\label{4889figure}
\end{figure}

\begin{proof}
    Let $G$ denote the graph in \cref{4889figure}.\footnote{We found the graph in \cref{4889figure} by employing the linear programming methodology described in \cref{sec:lb-method}. We considered $L=5$ possible bias classes, $\{\pm b_1, \pm b_2, 0\}$, where $b_1$ and $b_2$ were multiples of $\frac1{10}$, and performed a grid search over these possibilities. To discretize the space of all antisymmetric functions, we considered all functions mapping $b_1$ and $b_2$ to multiples of $\frac1{100}$ between $\frac12$ and $1$.} As described in the figure's caption, this graph has a cut of weight $179.28$, while any oblivious cut has value $v(p,q) = 53.6175 - 36 p^2 + p (70.02 - 36 q) + 35.01 q - 9q^2$ where $0\leq p,q\leq 1$ are assignment probabilities for the two bias classes. Since $\frac{\partial v}{\partial p} = 70.02 - 72 p - 36 q$ and $\frac{\partial v}{\partial q} = 35.01 - 36 p - 18 q$, both derivatives vanish when $q = 1.945-2p$. Substituting back, we see that $v(p,1.945-2p) = 87.6647$; this is the minimum value of $v(p,q)$ over $\BR^2$ (and it is achieved at e.g. $(p,q)=(0.6,0.745) \in [0,1]^2$). Finally, $\frac{87.6647}{179.28} \approx 0.48898$, as desired.
\end{proof}

\newpage

\appendix
\section{Recap: The prior lower bound of \textcite{FJ15} (\cref{thm:fj-lb-antisym})}\label{sec:fj-lb}

In this appendix, we include a brief description and analysis of the graph used to prove the lower bound of \textcite{FJ15} (\cref{thm:fj-lb-antisym}):

\thmfjantisym*

To prove this theorem, \textcite{FJ15} used the pair of graphs appearing in \cref{fig:fj-graphs} below:

\begin{figure}[h]
\centering
\begin{subfigure}[t]{\FIGsubfigwidth}
\centering
\begin{tikzpicture}
    \node[draw, circle, fill=\FIGcolpos] (v1) at (0,\FIGpacky) {1};
    \node[draw, circle, fill=\FIGcolpos] (v2) at (0,0) {2};
    \node[draw, circle, fill=\FIGcolzero] (v3) at (\FIGpackx,\FIGpacky) {3};
    \node[draw, circle, fill=\FIGcolzero] (v4) at (\FIGpackx,0) {4};
    \node[draw, circle, fill=\FIGcolneg] (v5) at (2*\FIGpackx,\FIGpacky) {5};
    \node[draw, circle, fill=\FIGcolneg] (v6) at (2*\FIGpackx,0) {6};

    \path[diredge] (v1) edge[bend right=\FIGbend] node[weight] {1} (v2);
    \path[diredge] (v2) edge[bend right=\FIGbend] node[weight] {$c$} (v1);
    \path[diredge] (v1) edge node[weight] {$c^2-1$} (v4);
    \path[diredge] (v4) edge node[weight] {$c^2-1$} (v3);
    \path[diredge] (v3) edge node[weight] {$c^2-1$} (v6);
    \path[diredge] (v5) edge[bend right=\FIGbend] node[weight] {1} (v6);
    \path[diredge] (v6) edge[bend right=\FIGbend] node[weight] {$c$} (v5);
\end{tikzpicture}
\caption{\emph{Parameter:} $c>1$ (``$G_1$'' from \cite{FJ15}). The \FIGcolposdesc vertices ($\{1,2\}$) have bias $+(c-1)/(c+1)$. The \FIGcolzerodesc vertices ($\{3,4\}$) have bias $0$. The \FIGcolnegdesc vertices ($\{5,6\}$) have bias $-(c-1)/(c+1)$. The assignment $\{1,3,5\} \to 1, \{2,4,6\}\to0$ satisfies weight $2\cdot (c^2-1) + 2 \cdot 1= 2c^2$. An oblivious assignment $\{1,2\} \to p,\{3,4\}\to q, \{5,6\} \to r$ satisfies weight $p(1-p)(c+1)+p(1-r)(c^2-1)+r(1-r)(c^2-1)+r(1-q)(c^2-1)+q(1-q)(c+1)$.}\label{fig:fj-graphs:a}
\end{subfigure}
~
\begin{subfigure}[t]{\FIGsubfigwidth}
\centering
\begin{tikzpicture}
    \node[draw, circle, fill=\FIGcolzero] (v7) at (3*\FIGpackx,\FIGpacky) {1};
    \node[draw, circle, fill=\FIGcolpos] (v8) at (3*\FIGpackx,0) {2};
    \node[draw, circle, fill=\FIGcolneg] (v9) at (4*\FIGpackx,\FIGpacky) {3};
    \node[draw, circle, fill=\FIGcolzero] (v10) at (4*\FIGpackx,0) {4};

    \path[diredge] (v7) edge[bend right=\FIGbend] node[weight] {1} (v8);
    \path[diredge] (v8) edge[bend right=\FIGbend] node[weight] {$c$} (v7);
    \path[diredge] (v9) edge[bend right=\FIGbend] node[weight] {1} (v10);
    \path[diredge] (v10) edge[bend right=\FIGbend] node[weight] {$c$} (v9);
    \path[diredge] (v7) edge node[weight] {$c-1$} (v10);
\end{tikzpicture}
\caption{\emph{Parameter:} $c>1$ (``$G_2$'' from \cite{FJ15}). The \FIGcolposdesc vertex ($2$) has bias $+(c-1)/(c+1)$. The \FIGcolzerodesc vertices ($\{1,4\}$) have bias $0$. The \FIGcolnegdesc vertex ($3$) has bias $-(c-1)/(c+1)$. The assignment $\{1,3\} \to 0, \{2,4\}\to1$ satisfies weight $2c$. An oblivious assignment $2 \to p, 3 \to q, \{1,4\} \to r$ satisfies weight $p(1-r)(c) + r(1-p) + r(1-r)(c-1) + r(1-q)(c) + q(1-r)$.}\label{fig:fj-graphs:b}
\end{subfigure}
\caption{The pair of graphs used to prove a lower bound against antisymmetric selection functions by \textcite{FJ15}.}\label{fig:fj-graphs}
\end{figure}

\begin{proof}
    Let $0 \leq \lambda \leq 1$ and $c > 1$ be two TBD constants. Let $G$ denote a weighted disjoint union of the graphs in \cref{fig:fj-graphs:a,fig:fj-graphs:b}, weighted by $\lambda$ and $1-\lambda$, respectively. As in the figures' caption, $G$ has a cut satisfying weight \[ \lambda (2c^2) + (1-\lambda)(2c). \] Now, consider an antisymmetric selection function $\RoundS : [-1,+1] \to [0,1]$. Suppose $\RoundS(+\frac{c-1}{c+1}) = p$. By antisymmetry, $\RoundS(0) = \frac12$ and $\RoundS(-\frac{c-1}{c+1}) = 1-p$. Thus, as in the caption, $\Obl{\RoundS}$ satisfies weight
    \begin{multline*}
        \lambda \left(p(1-p)(c+1)+\frac12p(c^2-1)+\frac14(c^2-1)+\frac12p(c^2-1)+p(1-p)(c+1) \right) \\ + (1-\lambda)\left(\frac12p(c) + \frac12(1-p) + \frac14(c-1) + \frac12 p(c) + \frac12(1-p)\right) \\
        = (1 - \lambda) \left(1 + \left(\frac14 + p\right) (c-1)\right) + \lambda \left(\left(\frac14 + p\right) (c^2-1) + 2 (c+1) (1 - p) p)\right)
    \end{multline*}
    Thus, $\Obl{\RoundS}$ has ratio at most
    \[
    \frac{(1 - \lambda) \left(1 + \left(\frac14 + p\right) (c-1)\right) + \lambda \left(\left(\frac14 + p\right) (c^2-1) + 2 (c+1) (1 - p) p\right)}{2 (\lambda c^2 + (1-\lambda) c)}.
    \]
    Evaluating at $c=\frac54$ and $\lambda = \frac34$ (found in \cite{FJ15} using computer search) and then maximizing over $p$ gives the bound.
\end{proof}

We note that the graphs we used to prove \cref{thm:gen-lb}, i.e., \cref{fig:twovertex,fig:fourvertex}, are simpler than the graphs used in \cite{FJ15} to prove \cref{thm:fj-lb-antisym}, i.e., \cref{fig:fj-graphs}: In particular, there are no vertices of bias zero.

\printbibliography

\end{document}